\theoremstyle{remark}
\newtheorem{proposition}{Proposition}
\begin{document}

\title{Slow Beam Steering and NOMA for Indoor Multi-User Visible Light Communications}

\author{\IEEEauthorblockN{Yusuf Said Ero\u{g}lu,~\IEEEmembership{Member,~IEEE}, Chethan Kumar Anjinappa,\\ \.{I}smail G\"{u}ven\c{c},~\IEEEmembership{Senior~Member,~IEEE}, and Nezih Pala}
\thanks{This work is supported in part by NSF CNS award 1422062. 
Yusuf Said Ero\u{g}lu, Chethan Kumar Anjinappa, and \.{I}smail G\"{u}ven\c{c} are with the Department of Electrical and Computer Engineering, North Carolina State University, Raleigh, NC (e-mail:~\{yeroglu, canjina, iguvenc\}@ncsu.edu). 
Nezih Pala is with the Department of Electrical and Computer Engineering, Florida International University, Miami, FL (e-mail:~npala@fiu.edu). }
}

\maketitle

\vspace{-4mm}
 
\begin{abstract}\vspace{-1mm}
Visible light communication (VLC) is an emerging technology that enables broadband data rates using the visible spectrum. In this paper, considering slow beam steering where VLC beam directions are assumed to be fixed during a transmission frame, we find the steering angles that simultaneously serve multiple users within the frame duration and maximize the data rates. This is achieved by solving a non-convex optimization problem using a grid-based search and majorization-minimization (MM) procedure. Subsequently, we consider multiple steerable beams with a larger number of users in the network and propose an algorithm to cluster users and serve each cluster with a separate beam. We optimize the transmit power of each beam to maximize the data rates. Finally, we propose a non-orthogonal multiple access (NOMA) scheme for the beam steering and user clustering scenario, to further increase the data rates of the users. The simulation results show that the proposed beam steering method can efficiently serve a high number of users, and with the power optimization, a data rate gain up to ten times is possible. The simulation results for NOMA suggests an additional 10 Mbps sum rate gain for each NOMA user pair.
\end{abstract}

\vspace{-1mm}

\begin{IEEEkeywords}
Beam steering, free space optics (FSO), Li-Fi, micro-electro-mechanical systems (MEMS), NOMA, optical wireless communications (OWC).
\end{IEEEkeywords}

\vspace{-3mm}
\section{Introduction}
Visible light communication (VLC) technology uses light sources such as LEDs for both illumination and wireless data transfer. In this technology, light-emitting diodes (LEDs) act as an antenna and transmit data to users through modulating light intensity. Due to the high frequency of the modulation, the changes in the signal are not perceivable to the human eye\cite{rajagopal2012ieee}. Depending on the LED or lens type, VLC light beams can be highly directional\cite{Eroglu_JSAC, Nakagawa}. Such directional LEDs can be preferred for providing higher signal strength at longer distances, decreasing interference in other directions, or providing an accurate angle of arrival information for localization purposes. 

VLC networks can provide highly accurate localization information \cite{Alphan_JLT, eroglu_2015}, and this location information can be used to steer the light beam towards user location by manipulating the orientation of the light source to further enhance the communications performance. It has been shown in the literature that using a \textit{steerable directional beam} maximizes both the overall signal strength and the coverage area \cite{Rahaim2017}. In\cite{eroglu_VLCS}, tracking users by steering LEDs is shown to provide a higher signal to interference plus noise ratio (SINR) in the VLC cell borders, which provides smoother handovers between adjacent VLC access points (APs). A beam steering scheme is studied with angle diversity receivers in~\cite{5770677}, where the beam can be steered in some certain orientations which are predetermined depending on the user location distribution. The study is extended for imaging receivers in \cite{alsaadi2010high, hussein201520}. However, these studies assume that each user is tracked with a dedicated LED or multiple LEDs. When the number of users is lower than or equal to the number of steerable beams the steering is relatively simple because each user can be assigned a single beam that tracks the user. However, in some cases, the number of users can be higher than the number of steerable beams. In such cases, how to steer the LEDs and distribute time allocation to users is an open problem which has not been addressed in the literature.

In recent years, non-orthogonal multiple access (NOMA) schemes have received significant attention for cellular networks \cite{6692652, 6868214}. The primary reason for adopting NOMA is its ability to serve multiple users using the same time and frequency resources. NOMA achieves this by assigning different power levels to users that have distinctive channel gains. In \cite{marshoud2016non}, the use of NOMA is investigated for VLC, and it was found that NOMA can serve multiple users to provide higher data rates compared to orthogonal multiple access (OMA) such as time or frequency division. In \cite{haas_noma}, VLC NOMA is studied for two users case, and it is shown that the gain of NOMA over OMA further increases when users with more distinctive channel gains are paired. In \cite{7792590}, NOMA user selection and power allocation are studied, and the power coefficients are derived considering fairness among users in \cite{7752879}. However, use of NOMA has not been addressed in the VLC literature for a beam steering scenario, and it has not been studied considering the inter-beam interference caused from other steerable beams.

In this paper, which is substantially extended from \cite{eroglu_spawc}, we investigate the optimal beam steering parameters for proportionally fair rate allocation, especially for the case where the number of users is higher than the number of steerable beams. The contributions of the paper can be summarized as follows:
\begin{itemize}
\item[i.] We define the steering problem for a single beam and multiple users. The optimization parameters are the steering angles, the directivity index of the beam, and the time allocation of each user. We propose a solution for the non-convex problem using a grid search based optimization and majorization-minimization (MM) procedure. Our results show that the proposed beam steering improves the data rates significantly by increasing the users' signal strength. While the data rate gain can be more than four times with a single user, a higher number of users can also be served by a single beam with a lower data rate gains.
\item[ii.] We propose a method for decreasing the search space to reduce the computation time for the mentioned problem. 
\item[iii.] We evaluate the case where there are multiple steerable beams. As a solution for steering and multiple access in this scenario, we propose a user grouping algorithm which is an extension of the $k$-means clustering algorithm. In particular, we cluster the users and assign a single beam to each cluster. With this method, the time allocation of each user is increased by exploiting the spatial diversity of the users. The simulation results show that ten users can be best served with three independent beams, and the data rate gain due to steering is four times for this case. 
\item[iv.] We find the optimum transmit power of each beam with respect to a total power constraint. We do it by solving a maximization problem that finds the transmit powers that maximize the sum rate or provide proportionally fair rates. The power optimization provides an additional sum rate gain between 30 - 70 Mbps, where the total gain over no steering scheme can be up to 10 times. 
\item[v.] Finally, we propose a NOMA scheme by coupling users in the same cluster to further improve the data rates. We find the optimum NOMA power coefficients for a user pair again utilizing the MM procedure. The MM procedure has not been utilized to achieve the VLC NOMA coefficients in the literature. With the coefficients found by our method, the user pair has a 10 Mbps sum rate gain in a  proportionally fair allocation, where the weaker user gets a larger portion of the gain. 
\end{itemize}

The remainder of the paper is organized as follows. In Section~\ref{SystemModel}, we review beam steering mechanisms and channel model assumptions for VLC and introduce the multi-user beam steering problem. In Section~\ref{PropSol}, we first consider that there is a single steerable beam, and present the solution to the introduced problem using grid search and MM procedure. In Section~\ref{MultipleBeam}, we extend the solution to multiple steerable beams case where users are clustered and a single beam is assigned to each cluster. In Section~\ref{NOMAsection}, we propose a NOMA scheme for the users in the same cluster. In Section~\ref{SimRes}, we present the simulation results, and finally, we conclude the paper in Section~\ref{Conc}.
\textit{Notations:} The Euclidean norm is denoted by $||.||_2$, the transpose of a vector/matrix is denoted by $[.]^T$, and vectors are represented by bold symbols. 

\vspace{-3mm}

\section{System Model} \label{SystemModel}
In this section, we investigate the approaches proposed for VLC beam steering and introduce the slow beam steering problem.

\vspace{-3mm}

\subsection{VLC Beam Steering Model and Assumptions}
Piezoelectric beam steering is proposed in \cite{eroglu_VLCS} in order to track the user, improve the signal strength, and provide smoother handover between different VLC APs. Piezo actuators convert electrical signal into precisely controlled physical displacement. This property of piezo actuators is used to finely adjust machining tools, camera lenses, mirrors, or other equipment \cite{piezoactuators}. Piezoelectric actuators can also be used to tilt LEDs or lenses, to steer the beam directed towards user location. In Fig.~\ref{Piezo}, two different beam steering schemes using piezo actuators are illustrated. In Fig.~\ref{Piezo}(a), whole LED is tilted using a set of piezo actuators, while in Fig.~\ref{Piezo}(b), only the lens is steered. The setup in Fig.~\ref{Piezo}(b) makes it possible to change the directivity of the light beam by shifting the lens forward or backward. In order to tilt an LED to any angle, two sets of piezo actuators can be used: while one provides steering on one direction, the other provides steering in a perpendicular direction.

\begin{figure}[tb]
	\centering\vspace{-1mm}
	\includegraphics[width = 3.2 in]{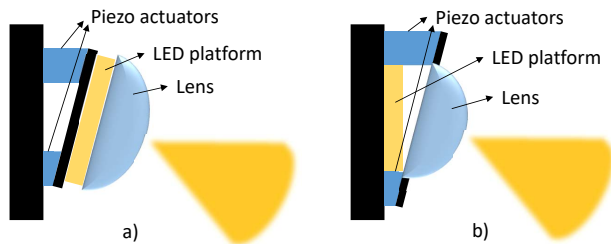}\vspace{-5mm}
	\caption{VLC beam steering using piezo actuators. (a) The LED and the lens are steered together. (b) Only the lens is steered.}\vspace{-6mm}
	\label{Piezo}
\end{figure}

Another method to steer LED light is to use micro-electro-mechanical system (MEMS) based mirrors \cite{Rahaim2017, 6525314, Morrison:15}, where the direction of the beam is controlled by changing the orientation of micromirrors. In \cite{Morrison:15}, a setup with LEDs and MEMS mirrors is presented with steering angles of $\pm 40^\circ$ with a settling time under 5~ms, additionally featuring adaptable beam directivity. As a similar method to MEMS mirrors steering, in \cite{6888049}, optical gratings are used to change the beam direction. MEMS mirrors are also studied in the context of steering laser beams for indoor free space optical (FSO) communications~\cite{Varifocal, Oh:15, Knoernschild}. The phased arrays are also used for beam steering/beam forming of optical wireless signals\cite{Wu2014, resler1996high}.  

In this study, without explicitly assuming any of the aforementioned beam steering methods, we consider a VLC AP with a limited number of steerable beams that can be steered within a given range. Additionally, we consider two scenarios where: 1) the beam directivity is fixed, or 2) the beam directivity can be changed within a given range.

\vspace{-3mm}

\subsection{Channel Model}
Initially, we consider an AP with a single steerable light beam and $K$ users, and Fig.~\ref{SingleBeam} shows an example scenario for $K = 3$. 
\begin{figure}[tp]
	\centering
	\vspace{1mm}
	\subfigure[Single steerable beam.]{
		\includegraphics[width=1.29in] {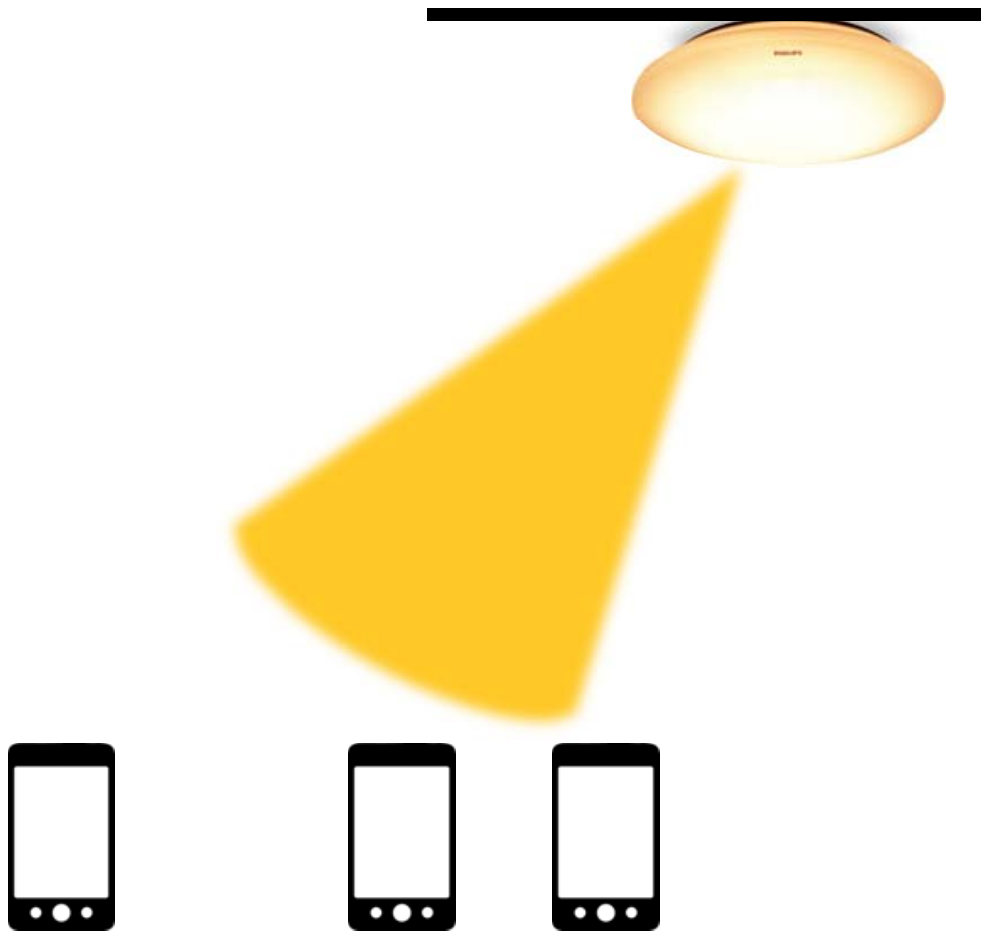}
		\label{SingleBeam}
	}
	\subfigure[Multiple steerable beams.]{
		\includegraphics[width=1.95in] {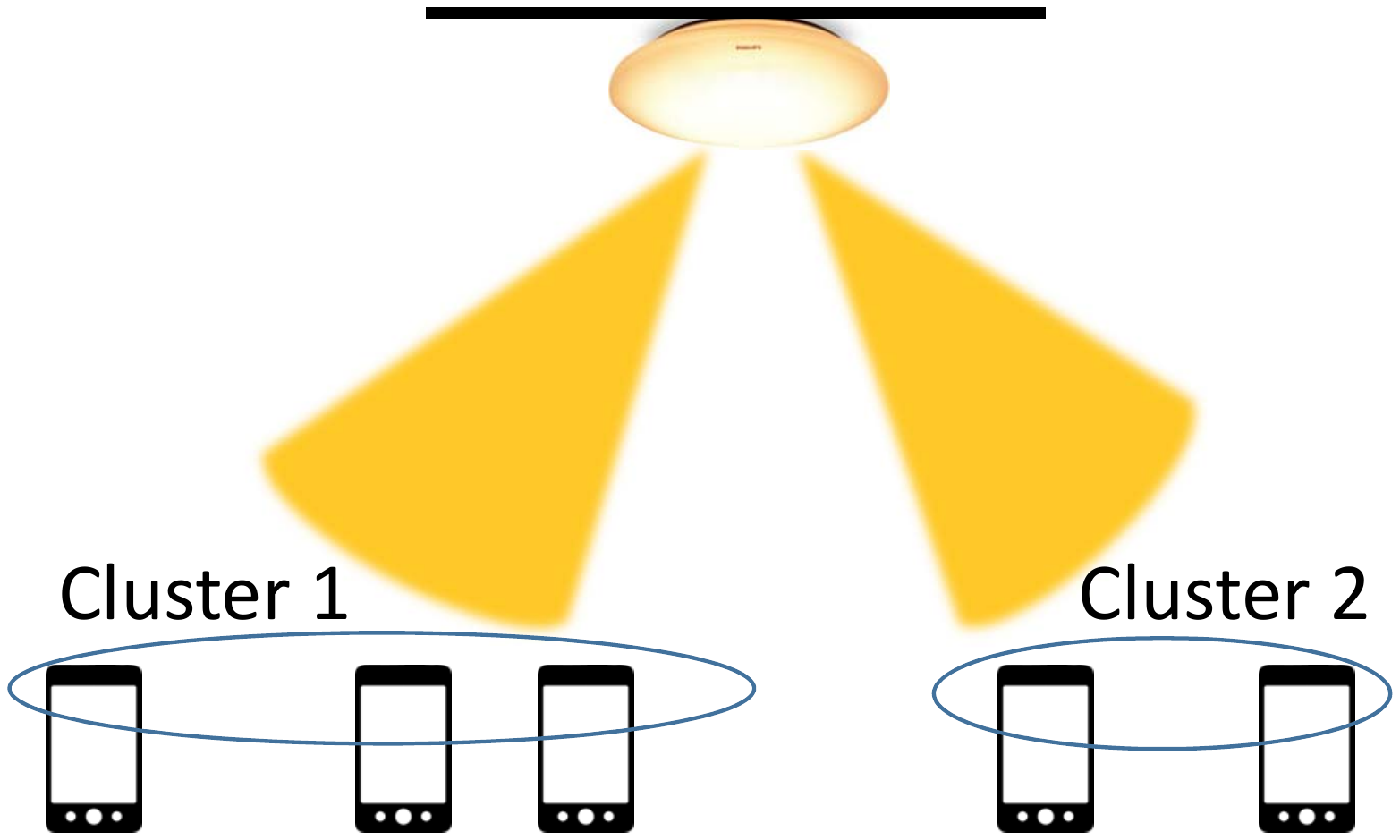}
		\label{MultiBeam}
	}\vspace{-1mm}
    \caption{Steering single and multiple VLC beams to user clusters. }\vspace{-6mm}
	\label{SteeringScenarios}
\end{figure}
The AP serves all users with time division multiple access (TDMA), and $k$-th user is served with time ratio $\tau_k$, where $0 \leq \tau_k \leq 1$. We aim at finding the steering angles and LED directivity index which maximizes logarithmic sum rate of all users. In 3D model, we need two angles to specify the orientation of the beam, which are the elevation and the azimuth angles, denoted by $\alpha$ and $\beta$, respectively, as shown in Fig.~\ref{SteeringAngles}. We can convert these angles to an orientation vector given as:
\begin{align}
\textbf{n}_{\rm tx}&= [n_{x\rm(tx)}, n_{y\rm(tx)}, n_{z\rm(tx)}]^T =[\cos(\beta)\cos(\alpha),~ \sin(\beta)\cos(\alpha),~ \sin(\alpha)]^T. \label{orient}
\end{align}
The location of the AP is $\textbf{r}_{\rm tx}=[x_{\rm tx}, y_{\rm tx}, z_{\rm tx}]^T$. Likewise, the location and the orientation of the $k$-th user are $\textbf{r}_{k}= [x_{k}, y_{k}, z_{k}]^T$, and $\textbf{n}_{\rm k}= [n_{x(k)}, n_{y(k)}, n_{z(k)}]^T$, respectively. Then, the vector from the AP to $k$-th user is  $\textbf{v}_k = \textbf{r}_{k} - \textbf{r}_{\rm tx} = [v_{x(k)}, v_{y(k)}, v_{z(k)}]^T.$  The distance between the LED and the $k$-th user is $d_k = ||\textbf{v}_k||_2$, while the angle between the LED orientation and $\textbf{v}_k$ is denoted as $\phi_k$, and we can write:
\begin{align}
\cos(\phi_k) = \frac{\textbf{n}_{\rm tx}^T(\textbf{r}_k-\textbf{r}_{\rm tx})}{d_k} = \frac{\textbf{v}_k^T \textbf{n}_{\rm tx}}{||\textbf{v}_k||_2} \, . \label{cosPhi}
\end{align}
Similarly, the angle between the receiver orientation and $\textbf{v}_k$ is $\theta_k$, and we can write:
\begin{align}
\cos(\theta_k) = \frac{\textbf{n}_k^T(\textbf{r}_{\rm tx}-\textbf{r}_k)}{d_k} = - \frac{\textbf{v}_k^T \textbf{n}_{k}}{||\textbf{v}_k||_2} \, . \label{cosTheta}
\end{align}
We assume a light beam radiation follows the Lambertian pattern \cite{219552}, with $\gamma$ being the directivity index of the beam. The effect of the directivity index on the beam shape is illustrated in Fig.~\ref{BeamDirectivity}, where two contours receive the same power from two LEDs with different $\gamma$. Then, assuming the receiver has a wide field of view (FOV), we can remove the FOV constraint, and the line-of-sight (LOS) channel gain of the $k$-th user can be calculated using \eqref{orient}-\eqref{cosTheta} as follows:
\begin{align}
h_k =~ \frac{\gamma + 1}{2\pi}& A_rr\cos^\gamma(\phi_k)\cos(\theta_k)\frac{1}{d_k^2} \label{h_first}
~=~ \frac{\gamma + 1}{2\pi} A_rr \frac{ (\textbf{v}_k^T\textbf{n}_{\rm tx})^\gamma \, \textbf{v}_k^T\textbf{n}_k }{||\textbf{v}_k||_2^{n+3}} \\ 
=~ \frac{\gamma + 1}{2\pi}& A_rr\frac{\left( v_{x(k)}n_{x(k)} + v_{y(k)}n_{y(k)} + v_{z(k)}n_{z(k)}\right)}{\left(v_{x(k)}^2 + v_{y(k)}^2 + v_{z(k)}^2\right)^\frac{\gamma+3}{2}} \label{channelGain}
\\ &\times \big(v_{x(k)}\cos(\beta)\cos(\alpha) + v_{y(k)}\sin(\beta)\cos(\alpha) + v_{z(k)}\sin(\alpha) \big)^\gamma, \nonumber
\end{align}
where $A_r$ is the detection area of the PD, and r is the responsivity coefficient. Using \eqref{channelGain}, the rate of the $k$-th user is given as~\cite{6883844}
\begin{align}
R_k = B\log\left(1 + \frac{(ph_k)^2}{N_0B}\right),
\end{align}
where $p$ is the transmit power of the LED. The $N_0$ is the spectral density of additive white Gaussian noise, and $B$ is the communication bandwidth. 

\begin{figure}[tp]
	\centering
	\vspace{-1mm}
	\subfigure[The elevation and azimuth angles, $\alpha$ and $\beta$, respectively.]{
		\includegraphics[width=1.5in] {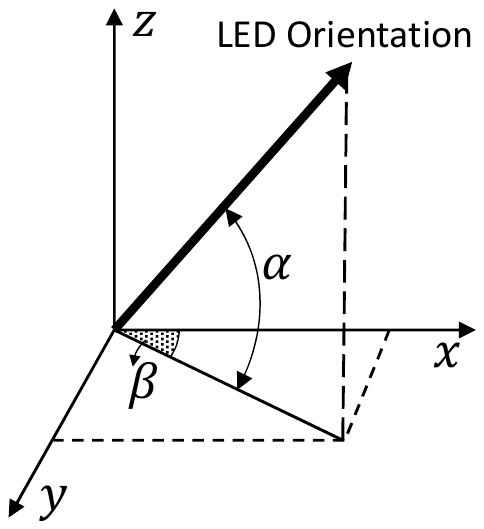}
		\label{SteeringAngles}
	}~
	\subfigure[Equal power contours for different directivity index, $\gamma$.]{
		\includegraphics[width=1.6in] {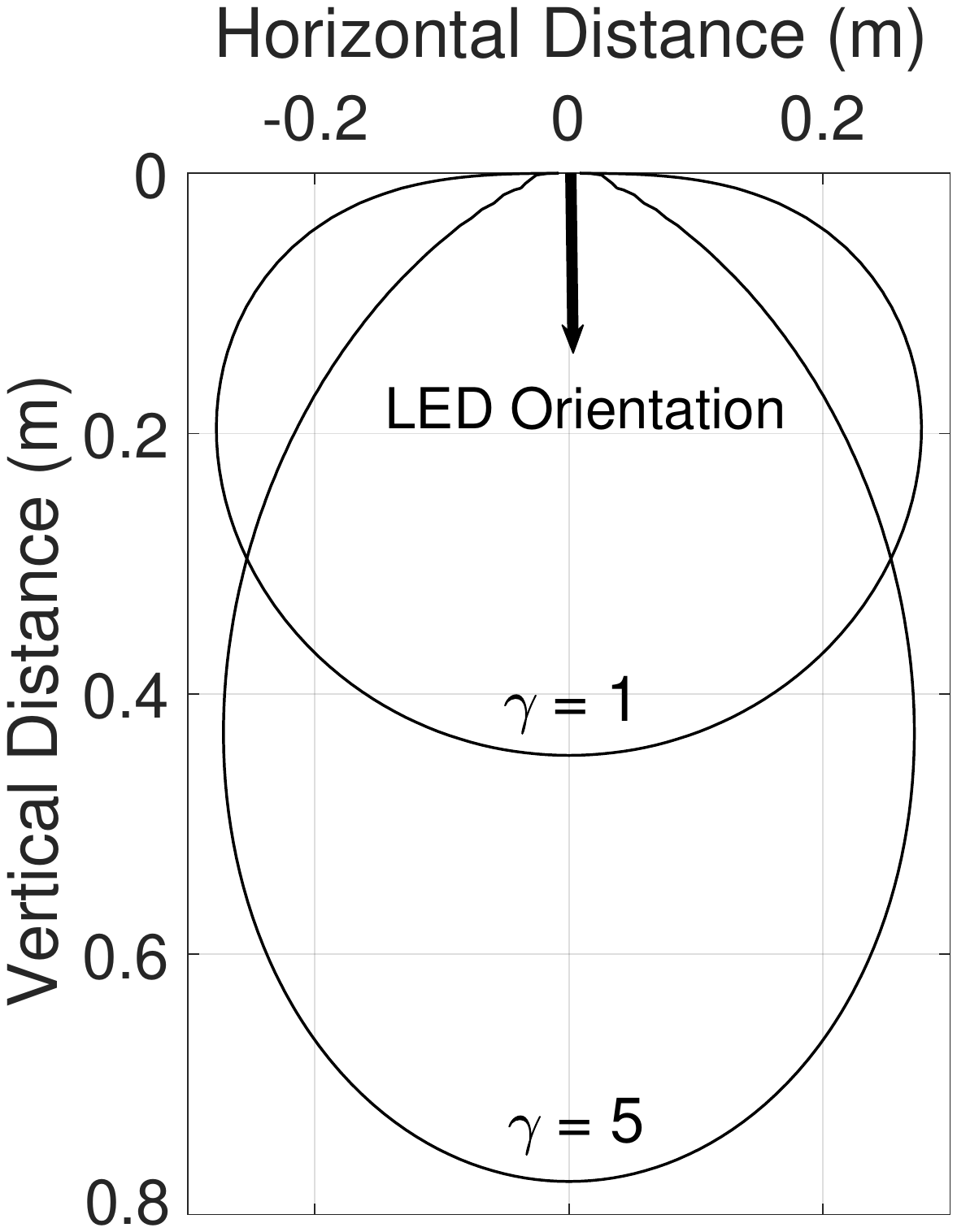}
		\label{BeamDirectivity}
	}\vspace{-1mm}
    \caption{Illustration of steering angles and directivity index. }
	\label{AlfaBetaGama} \vspace{-6mm}
\end{figure}

\vspace{-3mm}

\subsection{Slow Beam Steering for Multiple Access VLC}
We consider a model where the beam is steered so that multiple users can access the channel with TDMA without changing the beam orientation towards each user. There are two reasons not to consider changing beam orientation each time slot for each user. The first one is, there will be time loss between each time slot for orientation change. The shortest reported settling time for LED beam steering is 5~ms~\cite{Morrison:15}, which is close to the whole TDMA frame length used for Wi-Fi systems. The second reason is that it is not possible to do such a switching without a flickering effect. The human eye can capture changes up to 200 Hz \cite{rajagopal2012ieee}, which means the the periodic changes to the signal should settle under 5 ms. Considering that just one steering takes around 5 ms, it is not possible to quickly switch the beam between users without flickering. 

In this paper, we propose a solution where the beam is steered once for a given set of user locations, and no more steering is needed unless the location and orientation of the users change significantly. If any user movement occurs, new steering parameters are computed and the \emph{slow} beam is steering is carried out. Accordingly, the steering parameters can be found by solving the following constrained optimization problem:
\begin{equation}
\begin{split}
\tilde{\boldsymbol{\tau}}, \tilde{\alpha}, \tilde{\beta}, \tilde{\gamma} = {\rm arg}&\underset{ \boldsymbol{\tau}, \alpha, \beta, \gamma}{\rm ~max}~~  \displaystyle\sum_{k=1}^{K} \log(\tau_kR_{k}),\\
{\rm s.t.}~~ c_1:\quad & \alpha_{\rm min} \leq \alpha \leq \alpha_{\rm max}, \\
c_2:\quad & 0 \leq \beta \leq 360^\circ \,, \\
c_3:\quad & \gamma_{\rm min} \leq \gamma \leq \gamma_{\rm max} \,, \\ 
c_4:\quad & \sum_{k=1}^{K} \tau_k = 1 \, ,
\end{split}
\label{betaOpt}
\end{equation}
where $\alpha, \beta$ and $\gamma$ are beam steering and directivity parameters as captured in \eqref{orient}, \eqref{channelGain} and illustrated in Fig.~\ref{AlfaBetaGama}. The constraint $c_1$ limits the elevation angle within the steering capacity of the beam. The constraint $c_2$ is the azimuth limit, which shows that the beam can be steered towards any direction as long as the elevation angle allows. The constraint $c_3$ is for the limits of beam directivity index, which is decided by the device capabilities. The $\boldsymbol{\tau} = [\tau_1, ... , \tau_K]$ is the time division coefficient vector whose elements add up to 1. To make sure all users are served and the resources are distributed fairly, the objective function is the sum of logarithmic rate instead of sum rate~\cite{3010473}. If the logarithm is removed from objective function, a single user with the largest channel gain gets all time allocation and the beam is steered towards that user, leaving other users unserved. The solution of \eqref{betaOpt} will be discussed in Section~\ref{PropSol}.

\vspace{-3mm}

\subsection{NOMA Signal Model}
The TDMA serves each user on different time slots. Alternatively, NOMA serves all users simultaneously by exploiting the channel gain differences of users. Let there be $K$ NOMA users served by the same transmitter LED. The users are ordered based on the magnitude of their channel gains so that $h_1 < h_2 < ... < h_K$. The transmitter sends the signal to all users simultaneously by superposing the symbols in the power domain and adding a DC bias. The signal to be transmitted by the LED is
\begin{align}
x = p\sum_{k=1}^{K} \rho_ks_k +  I_{\rm DC}
\end{align}
where $p$ is the transmit power of the LED, $I_{\rm DC}$ is the DC bias added to the signal to ensure positive intensity, $s_k$ is the modulated message symbol for the $k$-th user, and $\rho_k$ is the NOMA power allocation coefficient for the $k$-th user. The message symbol signals are assumed to have zero mean and unit variance. In NOMA, users with poor channel conditions are allocated higher power. Therefore, $\rho_1 \ge \rho_2 \ge ... \ge \rho_K$ to make interference cancellation possible, and $\sum_{k=1}^{K} \rho_k^2 = 1$ to satisfy total electricity power constraint\cite{haas_noma}. Removing the DC bias at the receiver, the remaining received signal at $\ell$-th user is given by
\begin{align}
y_\ell = ph_\ell\sum_{k=1}^{K} \rho_ks_k + z_\ell,
\end{align}
where $z_\ell$ is the real-valued Gaussian noise with zero mean and variance $\sigma_\ell^2$. A constant noise power spectral density $N_0$ is assumed so that $\sigma_\ell^2 = N_0B$. Successive interference cancellation (SIC) is carried out to remove the signals of users with weaker channel gains. This is possible because the NOMA power coefficients of these signals are higher, therefore the symbol can be detected and removed from the received signal, as will be discussed in Section~\ref{NOMAsection}. On the other hand, the signals of stronger users are not canceled and treated as noise.

\vspace{-3mm}

\section{Single Steerable Beam} \label{PropSol}
In this section, we solve the optimization problem in~\eqref{betaOpt} for a single steerable beam and multiple users and introduce a method for decreasing the complexity of the solution. Subsequently, Section~\ref{MultipleBeam} will study the multiple steerable beam scenario. 

\vspace{-3mm}

\subsection{Solution to the Optimization Problem using MM}
We can divide the problem in \eqref{betaOpt} into two independent maximization problems by rewriting the objective function as follows:
\begin{equation}
\displaystyle\sum_{k=1}^{K} \log(\tau_kR_{k}) = \displaystyle\sum_{k=1}^{K} \log(\tau_k) + \displaystyle\sum_{k=1}^{K} \log(R_{k}). \label{sub_problems}
\end{equation}
Then, using the first summation in \eqref{sub_problems}, the first problem in \eqref{betaOpt} becomes: 
\begin{equation}
\tilde{\boldsymbol{\tau}} = {\rm arg}~\underset{ \boldsymbol{\tau}}{\rm max}~ \log \left(\prod_{k=1}^{K} \tau_k\right),  \label{tauOpt}
\end{equation}
subject to only $c_4$ in \eqref{betaOpt}. The answer to this trivial problem is $\tilde{\tau}_k = 1/K,~ \forall k$. The second problem based on \eqref{sub_problems} is given by
\begin{equation}
\tilde{\alpha}, \tilde{\beta}, \tilde{\gamma} = {\rm arg}\underset{ \alpha, \beta, \gamma}{\rm ~max}~~ \displaystyle\sum_{k=1}^{K} \log(R_{k}),
\label{betaOpt2}
\end{equation}
subject to $c_1$, $c_2$, and $c_3$ in \eqref{betaOpt}. The problem in \eqref{betaOpt2} is non-convex, hence gradient based optimization methods get stuck in a local optima. This can be seen in the channel gain in \eqref{channelGain}, which has sine, cosine, and exponential functions of optimization parameters. 

In order to not to use \eqref{channelGain} in the objective function, we follow a grid search based method and calculate the channel gain for discrete values of $\alpha, \beta$, and $\gamma$. To give an example, we separate all available range for $\alpha$ to discrete values with a small sampling interval $\delta$, hence we have $\boldsymbol{\alpha} = [\alpha_{\rm min}, \alpha_{\rm min} + \delta, ... , \alpha_{\rm max}]$. A similar sampling is also used for $\beta$ and $\gamma$, and the sizes of $\boldsymbol{\alpha}, \boldsymbol{\beta}$ and $\boldsymbol{\gamma}$ are $s_\alpha, s_\beta$, and $s_\gamma$, respectively. We calculate the channel gain for all possible $\alpha, \beta$, and $\gamma$ combinations and form a column vector $\textbf{h}_{\boldsymbol{\alpha, \beta, \gamma}}^{(k)}$, whose length is $s_\alpha \times s_\beta \times s_\gamma$, and its indices can be mapped back to $\alpha, \beta$, and $\gamma$. 

As a result, we can restructure the optimization problem in \eqref{betaOpt2} as follows:
\begin{align}
\tilde{\textbf{d}} = {\rm arg}\underset{ \textbf{d}}{\rm ~max} \displaystyle\sum_{k=1}^{K} \log & \left(B\log\left( 1 + \frac{ \left( p\, \textbf{d}^{T}\textbf{h}_{\boldsymbol{\alpha, \beta, \gamma}}^{(k)} \right)^2 }{N_0B}\right) \right), \nonumber \\
{\rm s.t.}~~ c_1 &:\quad \sum_{i = 1}^{s_\alpha s_\beta s_\gamma} d_{i} = 1, \label{betaOpt4} \\ \nonumber
c_2 &: \quad d_{i} = \{0,1\} \quad \forall i, 
\end{align}
where $\textbf{d}$ is a vector same size as $\textbf{h}_{\boldsymbol{\alpha, \beta, \gamma}}^{(k)}$. The constraints $c_1$ and $c_2$ in \eqref{betaOpt4} enforce that only one element of $\textbf{d}$ is equal to one, and the others are all equal to zero. Therefore, the vector multiplication results in choosing an element of $\textbf{h}_{\boldsymbol{\alpha, \beta, \gamma}}$. The problem with \eqref{betaOpt4} is the combinatorial nature of the problem due to the binary constraint $d_{i}$'s. 

In order to remove the binary constraint, we modify the problem further as:
\begin{align}
\tilde{\textbf{d}} = {\rm arg}\underset{ \textbf{d}}{\rm ~max} \displaystyle\sum_{k=1}^{K} \log & \left(B\log\left( 1 + \frac{ \left( p\, \textbf{d}^{T}\textbf{h}_{\boldsymbol{\alpha, \beta, \gamma}}^{(k)} \right)^2 }{N_0B}\right) \right) - \lambda ||\textbf{d}||_{0}, \label{betaOpt5} \\[2pt] 
{\rm s.t.}~~ &c_1:\quad \sum_{i = 1}^{s_\alpha s_\beta s_\gamma} d_{i} = 1;\quad  d_{i} \geq 0  \nonumber
\end{align}
where $||.||_0$ is the $\ell_0$ norm and $\lambda$ is a positive penalty parameter. This modification does not change the meaning of the problem in \eqref{betaOpt4}, however it is still combinatorial due to the $\ell_0$ norm\cite{eroglu_spawc, IRM_Cheth_Chandra}. Note that the problem in (15) and (14) are equivalent in nature because the solution $\tilde{\textbf{d}}$ obtained by solving (14) and (15) would be the same.  The rationale for using the penalized $\ell_0$ norm is that it helps to get rid of the binary constraints and it, along with $c_1$, still preserves the meaning of the problem by forcing the $\textbf{d}$ to be sparse with just one element being one.

The problem in \eqref{betaOpt5} can be relaxed by replacing $\ell_0$ norm with a strictly concave function (e.g. $\ell_q$ norm with $0<q<1$)\cite{MM_Tutorial}. Upon relaxation, the problem turns out to be non-convex and a near-optimal solution can be achieved using the MM procedure \cite{ he2017codebook, sun2017majorization}. The basic idea of the MM procedure is to keep the convex part as it is, and linearize the concave part of the function around a solution obtained in the previous iteration. The relaxed optimization problem with linearized $\ell_q$ norm can be written as follows:
\begin{align}\label{betaOpt6}
\tilde{\textbf{d}} = {\rm arg}\underset{ \textbf{d}}{\rm ~max} \displaystyle\sum_{k=1}^{K} \log & \left(B\log\left( 1 + \frac{ \left( p\, \textbf{d}^{T}\textbf{h}_{\boldsymbol{\alpha, \beta, \gamma}}^{(k)} \right)^2 }{N_0B}\right) \right) - \lambda \sum_{i=1}^{s_\alpha s_\beta s_\gamma}W_i(t) d_i, \\ \nonumber
{\rm s.t.}~~ c_1 &:\quad \sum_{i = 1}^{s_\alpha s_\beta s_\gamma} d_{i} = 1;\quad  d_{i} \geq 0,
\end{align}
where $W_i(t) = q (d_i+\epsilon)^{q-1}$ is the weight update of the majorizer function at iteration $t$, and $\epsilon$ is a small non-negative number added to overcome the singularity issue; without $\epsilon$, $W_i(t)$ becomes undefined at $d_i = 0$. Interested readers may refer \cite{IRM_Cheth_Chandra} and references therein for further details of the MM procedure. Solving (\ref{betaOpt6}) returns $\textbf{d}$, and the index of the non-zero element in $\textbf{d}$ can be mapped back to the optimal $\alpha, \beta$, and $\gamma$ values.

\vspace{-3mm}

\subsection{Decreasing the Steering Angle Search Space}
The optimization in \eqref{betaOpt6} operates over the whole search space in $\textbf{h}_{\boldsymbol{\alpha, \beta, \gamma}}^{(k)}$ to find the optimal steering angles and LED directivity index. However, searching all possible angles is unnecessary in many cases. For example, if all users are at one side of the room, we can narrow down the search space to that side of the room only, and decrease the computing complexity of the problem. In order to propose a method for narrowing down the search space, we provide the following propositions. In most use case scenarios, the transmitter LED is at a higher location than all of the users, and users are at a similar height. For these propositions, we assume users' heights are the same and therefore they all lie on the same plane. 

\begin{proposition} \label{prop1}
When there are two users in the system, optimal steering angle points to a location on the line segment between the location of the two users. \end{proposition}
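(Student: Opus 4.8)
\emph{Proof proposal.} The plan is to first strip the objective down to the only quantities that depend on the steering direction, then translate what remains into a purely geometric statement on the unit sphere, and finally close the argument with a short exchange step. Since $\gamma$ is decoupled from the steering angles in \eqref{sub_problems} it suffices to argue for each fixed $\gamma$, and by \eqref{tauOpt} we may take $\tilde\tau_k = 1/K$; thus the quantity to be maximized over $(\alpha,\beta)$ is $\log R_1 + \log R_2$. In \eqref{channelGain} the only factor involving $\mathbf{n}_{\rm tx}$ is $\cos^\gamma(\phi_k) = (\mathbf{v}_k^T\mathbf{n}_{\rm tx}/\|\mathbf{v}_k\|_2)^\gamma$, while $A_r$, $r$, $d_k$, and $\cos\theta_k$ depend only on the fixed geometry. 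Because $R_k$ is strictly increasing in $h_k$, $h_k$ is strictly increasing in $\cos\phi_k$ on $[0,1]$, and $\log R_k \to -\infty$ when $\cos\phi_k \le 0$, any optimal orientation satisfies $\cos\phi_1,\cos\phi_2 > 0$ and the objective is a sum of two separable terms, one strictly increasing in $\cos\phi_1$ and one strictly increasing in $\cos\phi_2$.

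Next I would set up the geometry. Let $\mathbf{u}_k = \mathbf{v}_k/\|\mathbf{v}_k\|_2$ be the unit direction from the AP to user $k$. A point $t\mathbf{r}_1 + (1-t)\mathbf{r}_2$ on the segment is seen from the AP along the normalization of $t\mathbf{v}_1 + (1-t)\mathbf{v}_2$, and as $t$ ranges over $[0,1]$ these directions sweep out exactly the shorter great-circle arc $\mathcal{A}$ joining $\mathbf{u}_1$ and $\mathbf{u}_2$ (this arc is well defined and shorter than a semicircle because the AP lies strictly above the common user plane, forcing $\angle(\mathbf{u}_1,\mathbf{u}_2) < \pi$). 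So the claim is equivalent to: the optimal $\mathbf{n}_{\rm tx}$ lies on $\mathcal{A}$, which in turn lies in the plane $P$ through the AP, $\mathbf{r}_1$, and $\mathbf{r}_2$.

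I would then argue by contradiction: assume an optimal $\mathbf{n}^\star$ is not on $\mathcal{A}$ and exhibit a direction that strictly increases both $\cos\phi_1$ and $\cos\phi_2$. If $\mathbf{n}^\star \notin P$, replace it by $\mathbf{n}' = \Pi_P(\mathbf{n}^\star)/\|\Pi_P(\mathbf{n}^\star)\|$; since $\mathbf{u}_k \in P$ we get $\mathbf{u}_k^T\mathbf{n}' = \mathbf{u}_k^T\mathbf{n}^\star/\|\Pi_P(\mathbf{n}^\star)\| > \mathbf{u}_k^T\mathbf{n}^\star$, using $\mathbf{u}_k^T\mathbf{n}^\star > 0$ and $\|\Pi_P(\mathbf{n}^\star)\| < 1$, so both cosines strictly increase. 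If $\mathbf{n}^\star \in P$ but lies on the complementary arc, rotate it within $P$ toward the nearer of $\mathbf{u}_1,\mathbf{u}_2$; since optimality already forces $\phi_1,\phi_2 \in [0,\pi/2)$, this rotation strictly decreases both $\angle(\mathbf{n},\mathbf{u}_1)$ and $\angle(\mathbf{n},\mathbf{u}_2)$ while keeping them in $[0,\pi/2)$, hence strictly increases both cosines. In either case the separable objective strictly increases, contradicting optimality, so the optimum lies on $\mathcal{A}$, i.e. the beam points at a location on the segment $[\mathbf{r}_1,\mathbf{r}_2]$.

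The main obstacle I anticipate is the bookkeeping around the feasibility constraints $c_1$--$c_3$ in \eqref{betaOpt}: the projection and rotation moves must stay inside the admissible $(\alpha,\beta,\gamma)$ window, so the statement is cleanest under the (stated) assumption that the AP is overhead with a steering range wide enough to reach $\mathcal{A}$; when a constraint is active the conclusion should be phrased as "the optimum is the feasible direction closest to the segment in the above arc sense." A secondary point to treat carefully is the degenerate case $\mathbf{u}_1 = \mathbf{u}_2$ (users collinear with the AP), where the segment of directions collapses to a point and the claim is immediate, together with the verification that the objective is genuinely separable in $(\cos\phi_1,\cos\phi_2)$, which is what makes "improve both coordinates" a legitimate strict improvement rather than a trade-off.
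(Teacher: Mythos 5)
Your proof is correct and rests on the same core idea as the paper's: a Pareto--improvement (exchange) argument showing that any orientation not pointing at the segment can be replaced by one that strictly increases both $\cos\phi_1$ and $\cos\phi_2$, hence both channel gains and both rates. The difference is in how the improving move is constructed. The paper works in the user plane: it notes $\phi_1+\phi_2>\phi_{1-2}$ off the segment and asserts that replacing the aim point A by its projection B onto the segment $\mathcal{K}$ decreases both angles --- a claim it illustrates with a figure but does not actually verify. You instead work on the unit sphere of directions and split the move into two steps, an out-of-plane projection onto the plane $P$ through the AP and both users (where the inequality $\mathbf{u}_k^T\mathbf{n}'=\mathbf{u}_k^T\mathbf{n}^\star/\|\Pi_P(\mathbf{n}^\star)\|>\mathbf{u}_k^T\mathbf{n}^\star$ is immediate), followed by an in-plane rotation onto the minor arc. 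This decomposition supplies the strict inequalities the paper leaves implicit, correctly identifies that ``pointing at the segment'' means lying on the minor great-circle arc between $\mathbf{u}_1$ and $\mathbf{u}_2$, and handles the separability and positivity preconditions ($\cos\phi_k>0$ at any optimum) that make ``improve both coordinates'' a legitimate strict improvement. Your caveats about active steering constraints and the degenerate collinear case are also points the paper glosses over; under the paper's implicit assumption that the feasible steering window contains the arc, your argument closes cleanly.
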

\begin{proof}
See Appendix~\ref{Appendix1}.
\end{proof}
\begin{proposition} \label{prop2}
When there are more than two users that are on the same plane, optimal steering angle points to a location in the convex hull of the user locations, which is the smallest convex polygon that includes all locations.
\end{proposition}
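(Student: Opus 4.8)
The plan is to reduce the statement to a claim about the beam's ``aim point'' and then argue by contradiction, extending the idea behind Proposition~\ref{prop1}. Since the transmitter lies above the common plane $\Pi$ of the users, any admissible beam direction meets $\Pi$ in a unique point $\mathbf{a}$, and conversely $\mathbf{a}$ fixes $(\alpha,\beta)$, so I will regard the objective of \eqref{betaOpt2} as a function of $\mathbf{a}$; I fix $\gamma$ throughout and re-optimize it at the end, since every step below applies verbatim for each admissible $\gamma$. The first observation is the reduction already visible in \eqref{channelGain}: $d_k$ and $\cos(\theta_k)$ are independent of the steering direction, so $h_k$ depends on $\mathbf{a}$ only through $\cos^{\gamma}(\phi_k)$, which is strictly increasing in $\cos(\phi_k)$ on $\phi_k\in[0,\pi/2)$. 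Hence $R_k$, and therefore $\sum_{k}\log(R_k)$, is strictly increasing in each $\cos(\phi_k)$, and it suffices to prove: if the optimal aim point $\mathbf{a}^{\star}$ were outside the convex hull $\mathcal{C}$ of the user locations, there would be a perturbation of $\mathbf{a}^{\star}$ that does not decrease any $\cos(\phi_k)$ and strictly increases at least one.

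The second step is a monotonicity lemma: $\mathbf{a}\mapsto\cos(\phi_k(\mathbf{a}))$ is quasi-concave on $\Pi$ with unique maximizer $\mathbf{r}_k$. Let $\mathbf{t}$ be the transmitter's orthogonal projection onto $\Pi$, $H>0$ its height above $\Pi$, and write $\tilde{\mathbf{x}}=\mathbf{x}-\mathbf{t}$. From \eqref{cosPhi}, $\cos(\phi_k)=(\tilde{\mathbf{a}}^{T}\tilde{\mathbf{r}}_k+H^{2})\big/\big[c_k\,(\|\tilde{\mathbf{a}}\|_2^{2}+H^{2})^{1/2}\big]$ with $c_k=(\|\tilde{\mathbf{r}}_k\|_2^{2}+H^{2})^{1/2}$ constant. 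Restricted to any line $\mathbf{a}(s)=\mathbf{a}_0+s\mathbf{w}$, this is an affine function of $s$ over the square root of a convex quadratic in $s$; its derivative has an affine numerator, hence changes sign at most once, so the restriction is unimodal. Quasi-concavity along every line is equivalent to convexity of the super-level sets, and Cauchy--Schwarz gives $\cos(\phi_k)\le1$ with equality only at $\mathbf{a}=\mathbf{r}_k$; thus each set $\{\mathbf{a}:\phi_k(\mathbf{a})\le\phi_k(\mathbf{a}^{\star})\}$ is convex and, since $\phi_k(\mathbf{r}_k)=0$, contains $\mathbf{r}_k$ in its interior. I will also record the gradient $\nabla_{\mathbf{a}}\cos(\phi_k)\propto\tilde{\mathbf{r}}_k-\lambda_k\tilde{\mathbf{a}}$ with $\lambda_k=(\tilde{\mathbf{a}}^{T}\tilde{\mathbf{r}}_k+H^{2})/(\|\tilde{\mathbf{a}}\|_2^{2}+H^{2})>0$ whenever $\phi_k<\pi/2$.

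Third, suppose $\mathbf{a}^{\star}\notin\mathcal{C}$, let $\mathbf{a}'$ be its metric projection onto $\mathcal{C}$, and set $\mathbf{u}=\mathbf{a}'-\mathbf{a}^{\star}$. The line through $\mathbf{a}'$ orthogonal to $\mathbf{u}$ supports $\mathcal{C}$, so $\langle\mathbf{r}_k-\mathbf{a}^{\star},\mathbf{u}\rangle\ge\|\mathbf{u}\|_2^{2}>0$ for every $k$; likewise, if the transmitter footprint satisfies $\mathbf{t}\in\mathcal{C}$, then $\langle\mathbf{a}^{\star}-\mathbf{t},\mathbf{u}\rangle<0$. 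Using the decomposition $\tilde{\mathbf{r}}_k-\lambda_k\tilde{\mathbf{a}}=(\mathbf{r}_k-\mathbf{a}^{\star})+(1-\lambda_k)(\mathbf{a}^{\star}-\mathbf{t})$, the directional derivative of $\cos(\phi_k)$ along $\mathbf{u}$ is, up to a positive factor, $\langle\mathbf{r}_k-\mathbf{a}^{\star},\mathbf{u}\rangle+(1-\lambda_k)\langle\mathbf{a}^{\star}-\mathbf{t},\mathbf{u}\rangle$; when every $\lambda_k\ge1$ both summands are nonnegative with the first strictly positive, so moving along $\mathbf{u}$ strictly increases the objective, contradicting optimality. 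Equivalently, one may use the interior first-order condition $\sum_k\nu_k(\tilde{\mathbf{r}}_k-\lambda_k\tilde{\mathbf{a}}^{\star})=\mathbf{0}$ with $\nu_k>0$, which rearranges to $\mathbf{a}^{\star}=\sum_k\mu_k\mathbf{r}_k+\mu_0\mathbf{t}$ with $\mu_k\ge0$ and $\mu_0+\sum_k\mu_k=1$; if $\mathbf{t}\in\mathcal{C}$ and $\mu_0\ge0$ this places $\mathbf{a}^{\star}$ in $\mathrm{conv}(\{\mathbf{r}_k\}\cup\{\mathbf{t}\})=\mathcal{C}$ directly.

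The main obstacle is precisely the case where some $\lambda_k<1$, equivalently $\mu_0<0$: then the perturbation toward $\mathcal{C}$ (or the stationarity rearrangement) no longer yields a simultaneous improvement for all users, and one must choose a smarter ascent direction or a two-stage move, or invoke additional structure --- the natural assumption being that the access point is above the served region, i.e.\ $\mathbf{t}\in\mathcal{C}$, together with the users being well inside the steering range so that the unconstrained first-order condition applies. I expect the bulk of the work to lie in making this simultaneous-improvement step airtight; the remaining loose ends are routine: excluding aim points with $\cos(\phi_k)=0$ (there $h_k=0$ and the log-objective is $-\infty$, hence never optimal), and verifying that the constraint $c_1$ in \eqref{betaOpt} is inactive at the optimum so the argument is genuinely unconstrained.
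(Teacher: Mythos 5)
Your overall strategy is the same one the paper uses in Appendix~B: reduce the steering direction to its aim point on the users' plane, note that the objective is increasing in each $\cos(\phi_k)$, and argue that an aim point outside the convex hull is Pareto-dominated by its metric projection onto the hull. The difference is that you actually compute the directional derivative of $\cos(\phi_k)$ and discover that the ``projection improves every user'' step only closes when $\lambda_k\ge 1$ for all $k$ (equivalently $\langle\mathbf{a}^{\star}-\mathbf{t},\,\mathbf{r}_k-\mathbf{a}^{\star}\rangle\ge 0$), and you explicitly leave the case $\lambda_k<1$ open. That is a genuine gap in your proposal as written --- but it is also exactly the step the paper asserts without justification (``all three angles will decrease''), and your calculation shows the assertion is not universally true. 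A concrete instance: with $H=1$, $\mathbf{t}=(0,0)$, users at $(0,0.2)$, $(9,-1)$, $(8,-2)$ and aim point $\mathbf{a}^{\star}=(10,0)$, the hull projection is $\mathbf{a}'=(9,-1)$, yet for the user at $(0,0.2)$ one gets $\cos(\phi_k)\approx 0.098$ at $\mathbf{a}^{\star}$ versus $\approx 0.086$ at $\mathbf{a}'$, so that user's angle strictly \emph{increases} under the projection. Your own formula predicts this: $\lambda_k=1/101$, and the directional derivative $\langle\mathbf{r}_k-\mathbf{a}^{\star},\mathbf{u}\rangle+(1-\lambda_k)\langle\mathbf{a}^{\star}-\mathbf{t},\mathbf{u}\rangle$ is negative there.

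So the missing step is not merely technical: the Pareto-improvement claim for the metric projection is false in general, and completing the proof requires either an additional hypothesis (e.g., restricting to aim points in the regime where every $\lambda_k\ge 1$, which the elevation-angle constraint $c_1$ may or may not enforce), a smarter choice of dominating point inside the hull than the nearest one, or a non-Pareto argument that compares objective values directly. Until one of these is supplied, neither your proposal nor the paper's Appendix~B constitutes a complete proof of Proposition~2; your quasi-concavity lemma and the identification of $\lambda_k<1$ as the obstruction are correct and are the right starting point for a repair.
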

\begin{proof}
See Appendix~\ref{Appendix2}.
\end{proof}

\begin{algorithm}[tb]
	\caption{The Graham Scan algorithm\cite{graham1972efficient}.}
	\begin{algorithmic}[1]
		\STATE Find the point with lowest $y$ value. If there are two points with the same y value, then choose the point with smaller x coordinate value. Make a list of points, and make this point the first one ($P[0]$).
		\STATE Sort the remaining $k-1$ points by the polar angle in counterclockwise order around $P[0]$, and add them to the list. If polar angles to two points are the same, delete the nearest point.
		\STATE For each point, if going to that point from the previous one takes a left turn keep that point in place. If it takes a right turn, remove previous points from the list until going to that point becomes a left turn.
		\STATE In the end, remaining points define the convex hull.
	\end{algorithmic}
	\label{graham_scan}
\end{algorithm}

According to the Proposition~\ref{prop1} and Proposition~\ref{prop2}, the optimal steering angle always points to some location within the convex hull or line segment that includes all user locations. To decrease the search space, we propose the following solution. We calculate the convex hull of user locations using Graham scan \cite{graham1972efficient}, which finds the convex hull of a finite set of points on the same plane. Let us assume we have $k$ points, where $k \ge 3$, and their Cartesian coordinates are $(x_k, y_k)$. The implementation of Graham scan for these points is explained in Algorithm~\ref{graham_scan}. After finding the convex hull, we search within $\alpha$ and $\beta$ angles that point to the hull. If there are more than three users that are on the same line segment, then the Graham scan returns two points, which results in a line segment instead of a hull. In this case, or in case there are only two users, the search space should be the $\alpha$ and $\beta$ angles that point to the line segment in between. 

\begin{wrapfigure}{r}{0.5\textwidth}
	\centering\vspace{-2mm}
	\includegraphics[width = 3 in]{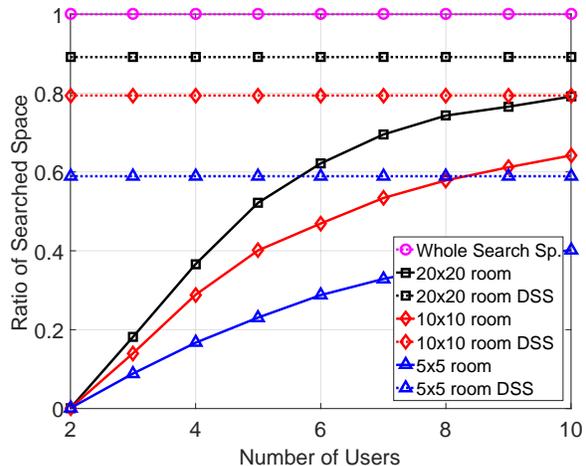}\vspace{-3mm}
	\caption{The ratio of the search space that needs to be scanned compared to the whole search space.}\vspace{-6mm}
	\label{SearchSpace}
\end{wrapfigure}

In Fig.~\ref{SearchSpace}, the ratio of the decreased search space (DSS) compared to the whole search space is shown. The simulation is done for an LED installed in the center of a room, at 4 m height, looking downwards. Users are located at $(x,y, 0.85)$ m, where $x$ and $y$ are uniformly distributed in the room. Three room sizes are considered: $5\times5$ m, $10\times10$ m, and $20\times20$ m. The elevation angle limits are $\alpha_{\rm min}=200$ and $\alpha_{\rm min}=340$. In Fig.~\ref{SearchSpace}, the dotted line with circle marker shows the ratio of the whole search space to itself, which is equal to one. The black dotted line with square markers shows the search space for the whole $20\times20$ m room, and the black solid line with square markers shows the DSS for the same room. For a low number of users, the search space is decreased significantly compared to the whole room, which means the algorithm provides a large gain in the computing time. When the number of users increases, they spread to a larger area and required search space increases too. Even when there are 10 users in the room, the algorithm reduces the search space about 10\%. The result is similar for smaller rooms, but they require a smaller search space compared to $20\times20$ m room. Overall, the proposed solution reduces the search space to 90\% - 1\% of the whole room depending on the number of users or the room size. 

\vspace{-3mm}

\section{Multiple Steerable Beams} \label{MultipleBeam}
In this section, we extend the solution in the previous section to multiple independently steerable beams case.

\vspace{-3mm}

\subsection{Steering Parameters for Multiple Beams}
In this subsection, we consider a transmitter that can steer multiple beams independently and therefore can track multiple users. When the number of users is equal or lower than the number of steerable beams, each user can be allocated a separate beam\footnote{In this paper, we do not address the problem of multiple beams serving to the same user.}. In case number of users are larger than the number of beams, users can be separated to clusters, and each cluster can be served with a separate beam as illustrated in Fig.~\ref{SteeringScenarios}(b). In order to cluster users, we introduce the VLC user clustering (VUC) algorithm, which is a modified $k$-means clustering technique. Each cluster corresponds to a separate beam. The VUC algorithm assigns users to the clusters based on the signal strength received from each beam, and finds the steering parameters for each beam, as described next in more detail. 

\subsubsection*{VLC User Clustering Algorithm}

Let there be $N$ steerable beams, and the steering angles and the directivity index of the $n$-th beam are $\alpha^{(n)}$, $\beta^{(n)}$, and $\gamma^{(n)}$, respectively. To initiate the algorithm, we randomly assign a single user to each cluster (i.e., assign first $N$ users to one cluster each). Initially, there are some unassigned users, but all users will be assigned to a cluster after the algorithm is completed. We have a total of $N$ clusters, and we repeat the following steps iteratively to find conclusive clusters and cluster centers. In the first step, we calculate the optimal steering parameters for the $n$-th beam, which are $\alpha^{(n)}$, $\beta^{(n)}$, and $\gamma^{(n)}$, solving the optimization problem in \eqref{betaOpt} as described in Section~\ref{PropSol}.A, for the users in the $n$-th cluster. We repeat this and find the steering parameters for each beam. While solving \eqref{betaOpt}, the search space should be decreased as described in Section~\ref{PropSol}.B to reduce the computation time. 

In the second step, we assign each user to the cluster whose beam provides the maximum signal strength to the user. We repeat these two steps until the steering parameters stay the same for two consecutive iterations. The proposed VUC algorithm is summarized in Algorithm~\ref{alg}, where $J(n)$ represent the set of users assigned to the $n$-th cluster, and $h_{k,n}$ denotes the channel gain between the $n$-th beam and the $k$-th user.

\begin{algorithm}[tb]
	\caption{The proposed VUC algorithm.}
	\begin{algorithmic}[1]
		\STATE Initialize: Assign user $n \rightarrow J(n) $ for $n = 1,...,N$
        \REPEAT 
        \FOR{$n$ = 1 to $N$} 
        \STATE {Solve \eqref{betaOpt} for the $n$-th beam and users in $J(n)$ to find the steering parameters of $n$-th beam ($\alpha^{(n)}$, $\beta^{(n)}$, and $\gamma^{(n)}$).} 
        \ENDFOR
        \FOR{$k$ = 1 to $K$} 
        \STATE { Find $n$ maximizing $h_{k,n}$, then assign user $k \rightarrow J(n)$.} 
        \ENDFOR
        \UNTIL{Steering parameters stay the same for two consecutive iterations.}
	\end{algorithmic}
	\label{alg}
\end{algorithm}

\vspace{-3mm}

\subsection{Power Optimization of Beams}
In this subsection, we discuss the optimal power allocation to different beams in order to maximize the sum rate of all users. The VUC algorithm works for a given transmit power of each beam and does not optimize the transmit powers. It is because the VUC algorithm aims at efficiently clustering users, and finding optimal steering parameters for each cluster. Considering that each cluster can have a different number of users, or users can have different received signal strength, we can improve the overall rate capacity by assigning different transmit powers to each beam. In a scenario with multiple beams serving different user clusters, each beam causes interference to users in other clusters. In this case, the SINR of the $k$-th user in the $n$-th cluster is given as
\begin{equation}
\xi_{k,n} = \frac{(p_nh_{k,n})^2}{N_0B + \displaystyle\sum_{\substack{m = 1 \\ m \neq n} }^{N}(p_mh_{k,m})^2 },
\label{eq14}
\end{equation}
where $p_n$ is the transmit power allocated to the $n$-th beam. The rate capacity of this user is
\begin{equation}
R_{k,n} = B\log(1+ \xi_{k,n} ).
\end{equation}
Then, the power optimization problem can be formulated as
\begin{equation}
\begin{split}
\tilde{\textbf{p}} = {\rm arg}\underset{ \textbf{p} }{\rm ~max}&~~  \displaystyle\sum_{k=1}^{K} \log(\tau_k R_{k,n}),\\
{\rm s.t.}~~ c_1:\quad & 0 \leq p_n \quad  \forall n \,,
\\ 
c_2:\quad & \sum_{n=1}^{N}p_n \leq p_{\rm max} \,,
\label{powerOpt}
\end{split} 
\end{equation}
where $\textbf{p}$ is the power allocation vector including power allocation of all beams. The constraint $c_1$ makes sure all power coefficients are positive, and the constraint $c_2$ makes sure their sum does not exceed the limit $p_{\rm max}$. The $\tau_k$ is the time allocation of the $k$-th user as addressed in \eqref{tauOpt}, and it is equal to $1/K_n$ where $K_n$ is the number of users served by the $n$-th beam. Note that using $\log$ at objective function is optional in this case. Even when we do not use it, more than one LED can be allocated some power level to maximize the overall sum rate. However, using logarithm can be preferred for fairness. 

The problem in \eqref{powerOpt} is non-convex because of the objective function. There are optimization parameters both on the numerator and the denominator of \eqref{eq14}, and linearizing the sum of logarithm of non-convex functions is not possible. To avoid this structure, we introduce auxiliary variables $\zeta_{k,n}$ and $\eta_{k,n}$ such that~\cite{he2017codebook}
\begin{equation}
\begin{split}
B\log(1+\zeta_{k,n}) \ge \eta_{k,n} \quad \forall \, k \,, \quad \mbox{and} \quad
\xi_{k,n} \ge \zeta_{k,n} \quad \forall \, k \, ,\label{auxil}
\end{split}
\end{equation}
where $\eta_{k,n}$ is a lower bound for the rate of $k$-th user in the $n$-th cluster, and $\zeta_{k,n}$ is a lower bound for the SINR of that user.

Using \eqref{auxil}, the problem in \eqref{powerOpt} becomes
\begin{equation}
\begin{split}
\tilde{\textbf{p}} = {\rm arg}\underset{ \textbf{p} }{\rm ~max}&~~  \displaystyle\sum_{k=1}^{K} \log(\tau_k \eta_{k,n}),\\
{\rm s.t.}~~ c_1:\quad & 0 \leq p_n \quad  \forall n \,,
\\ 
c_2:\quad & \sum_{n=1}^{N}p_n \leq p_{\rm max} \,, \\
c_3:\quad & B\log(1+\zeta_{k,n}) \ge \eta_{k,n} \quad \forall \, k \, , \\
c_4:\quad & \xi_{k,n} \ge \zeta_{k,n} \quad \forall \, k \, ,
\label{powerOpt2}
\end{split} 
\end{equation}
where $c_3$ and $c_4$ are added to satisfy \eqref{auxil}.
While the objective function in \eqref{powerOpt2} is now convex, the constraint $c_4$ is still non-convex, and the SINR expression is still there. To address this, we introduce another auxiliary variable set $\kappa_{k,n}$ which is an upper bound for the denominator of the $\xi_{k,n}$ given in \eqref{eq14}. Now we can replace $c_4$ with $c_5$ and $c_6$ which are given as:
\begin{equation}
\begin{split}
c_5:\quad & \frac{(h_{k,n}p_n)^2}{\kappa_{k,n}} \ge \zeta_{k,n} \quad \forall ~ k, \\
c_6:\quad & N_0B + \displaystyle\sum_{\substack{m = 1 \\ m \neq n} }^{N}(p_mh_{k,m})^2 \le \kappa_{k,n} \quad \forall ~ k.
\end{split} 
\end{equation}
Finally, the problem becomes
\begin{equation}
\begin{split}
\tilde{\textbf{p}} = {\rm arg}\underset{ \textbf{p}}{\rm ~max}&~~  \displaystyle\sum_{k=1}^{K} \log(\tau_k \eta_{k,n}),\\
{\rm s.t.}~~ c_1,\, c_2,\, c_3,\,& c_5,\, \mbox{ and } c_6 .
\label{powerOpt3}
\end{split} 
\end{equation}

The constraint $c_5$ in \eqref{powerOpt3} is still non-convex because of the expression $\frac{p_n^2}{\kappa_{k,n}}$, but it is in a simpler form and hence we can use MM procedure on this constraint. In order to use MM, we approximate the expression $\frac{p_n^2}{\kappa_{k,n}}$ for $k$-th user with multivariate first order Taylor series. This is a function of variables $p_n$ and $\kappa_{k,n}$, therefore we can express it as 
\begin{align}
\frac{p_n^2}{\kappa_{k,n}} = f(p_n, \kappa_{k,n}).
\label{taylor1}
\end{align}
The first order Taylor approximation for this function at point $p_n = a_n$ and $\kappa_{k,n} = b_{k,n}$ is:
\begin{align}
f(p_n, \kappa_{k,n}) \approx& f(a_n, b_{k,n}) + \frac{\partial f}{\partial p_n}(a_n, b_{k,n})(p_n-a_n) + \frac{\partial f}{\partial \kappa_{k,n}}(a_n, b_{k,n})(\kappa_{k,n} - b_{k,n}) \label{taylor2}\\
=& \frac{a_n^2}{b_{k,n}} + \frac{2a_n}{b_{k,n}}(p_n - a_n)
- \frac{a_n^2}{b_{k,n}^2}(\kappa_{k,n} - b_{k,n})
~=~ 2\frac{a_n}{b_{k,n}}p_n - \left(\frac{a_n}{b_{k,n}}\right)^2\kappa_{k,n}.
\label{taylor3}
\end{align}
The function in \eqref{taylor1} can be approximated by \eqref{taylor3} which is a convex expression. Inserting this expression into $c_5$, the constraint becomes
\begin{align}
c_5: \quad \left(h_{k,n}\right)^2\left(2\frac{a_n}{b_{k,n}}p_n - \left(\frac{a_n}{b_{k,n}}\right)^2\kappa_{k,n}\right) \ge \zeta_{k,n} \quad \forall ~ k \,. \nonumber
\end{align}

The MM procedure on \eqref{powerOpt3} operates iteratively. We first solve the problem for some initial values of $a_n$ and $b_{k,n}$. We do not need to carry out the iterations in two dimensions, because the constraint is only dependent on the division of ${a_n}$ and ${b_{k,n}}$. Therefore we update the value of $\frac{a_n}{b_{k,n}}$ at each iteration until it stays the same for two consecutive iterations, or the change between two consecutive iterations is not appreciable.

\vspace{-3mm}

\section{VLC NOMA for Beam Steering and User Clustering} \label{NOMAsection}
In the previous section, we assumed that all users in a cluster are served by TDMA. An additional approach to improve user rates is to implement NOMA for some of the users. Since the LEDs are directional, not all users in a cluster receive similar signal strength. Even though the optimization problem in \eqref{betaOpt} considers the fairness among users, due to the Lambertian pattern of the signal, we expect some users to receive much higher signal strength compared to others. In this case, an opportunistic approach would be to employ NOMA to exploit this uneven signal strength distribution and improve the overall data rate of users. 

In this section, we consider NOMA application for two users that are served by $n$-th LED, by coupling users whose channel gains are distinctively different. We will denote these two users as user-1 and user-2, where user-1 has the weaker channel gain ($h_{1,n}<<h_{2,n}$). The general VLC NOMA signal model is provided in Section~\ref{SystemModel}.D. The achievable data rate for these two users are given as follows:
\begin{align}
R_{i,n} = \log_2(1+\xi_{i})
\end{align}
where the SINRs for user-1 and user-2 are given as
\begin{align}
\xi_{1} = \frac{(h_{1,n}\rho_1p_n)^2}{ N_0B + \sum\limits_{\substack{m = 1 \\ m \neq n} }^{N} (h_{1,m}p_m)^2 + (h_{1,n}\rho_2p_n)^2 } \,, \quad \quad
\xi_{2} = \frac{(h_{2,n}\rho_2p_n)^2}{ N_0B + \sum\limits_{\substack{m = 1 \\ m \neq n} }^{N} (h_{2,m}p_m)^2 } \,.
\label{NOMA_SINRs}
\end{align}
The first interference term of both SINR expressions come from the interference caused by other beams. The user-1 has another interference component, which is the signal message of user-2. The user-2, on the other hand, does not have this interference since it detects and cancels the message of user-1. This rate is conditioned on the event that user-2 successfully detects and cancels the signal of user-1. Let $\xi_{2\rightarrow1}$ denote the SINR for user-2 to detect the message for user-1, and $\xi_1^{\rm *}$ as the targeted SINR for successful message detection at user-1. Then the condition can be expressed as $\xi_{2\rightarrow1} \geq \xi_1^{*}$, or explicitly
\begin{align}
\frac{( h_{2,n}\rho_1p_n )^2}{N_0B + \sum\limits_{\substack{m = 1 \\ m \neq n} }^{N} (h_{2,m}p_m)^2 + ( h_{2,n}\rho_2p_n )^2} \geq \xi_1^{*}.
\label{SINR_Thre}
\end{align}

\subsection{NOMA Parameter Optimization Problem}
In order to maximize the user rate, the parameters to be optimized include the power allocation of each beam and NOMA power coefficients of NOMA users. However, such an optimization problem would be too complex to solve. Firstly, in such a problem, deciding which user pair will utilize NOMA is difficult because the power allocation of each beam is unknown. An iterative solution can be proposed where the solution updates the power allocation, NOMA user pairs, and NOMA power coefficients of these users at each iteration. However, this solution would show erratic behavior since the selected NOMA pairs would introduce a non-continuous objective function because user pair selection is a binary optimization problem. Due to the complexity of this problem, we do not propose a single step solution. 

Instead of solving the problem in a single step, we can use the power optimization that is proposed in Section~\ref{MultipleBeam} as the first step of the solution, choose NOMA user pairs, and optimize the NOMA coefficients of these users as the second step of the problem. This solution is guaranteed to improve the overall sum rate as long as the sum rate of NOMA users is improved because the NOMA coefficients of a user pair do not affect the interference received by other users. The same statement is also valid for logarithmic sum rate. The sum of the logarithm of the user rates is proportional to the multiplication of the user rates. If the multiplication of the rates of two users increases, overall multiplication of the user rates increases too. 

In order to implement the proposed solution, we need to decide the NOMA user pairs. It is well known that NOMA is more efficient when the channel gains are more distinct \cite{7273963}. The simplest solution is to pair the users with the highest and lowest channel gains in each beam \cite{7273963, haas_noma} if they meet the SINR threshold criteria. After that, remaining users with the most distinctive channel gains can be paired if they meet the same criteria. For any user pair, the optimal NOMA coefficients can be found by solving the following problem:
\begin{equation}
\begin{split}
\tilde{\boldsymbol{\rho}} = {\rm arg}\underset{ \boldsymbol{\rho} }{\rm ~max}&~~  \displaystyle\sum_{i=1}^{2} \log( R_{i,n}),\\
{\rm s.t.}~~ c_1:\quad & \rho_1^2 + \rho_2^2 = 1 \,,
\\ 
c_2:\quad & \xi_{2\rightarrow1} \geq \xi_1^{*} \,,
\label{NOMA_Opt}
\end{split}
\end{equation}
where the constraint $c_1$ is for the preservation of energy, and the constraint $c_2$ is to ensure successful interference cancellation at user-2. The logarithm at the objective function is optional. 

\vspace{-3mm}

\subsection{Proposed Solution for NOMA Parameter Optimization}
In \eqref{NOMA_Opt}, the objective function and both constraints are non-convex. In the objective function, the only non-convex expression is the $\xi_1$, because there are optimization parameters both in the nominator and the denominator\cite{he2017codebook}. In order to handle this expression, as we did in the solution of \eqref{powerOpt}, we introduce an auxiliary parameter $\zeta$ as a lower bound of SINR of user-1: $\xi_{1} \ge \zeta  \,$ . Now the problem in \eqref{NOMA_Opt} becomes
\begin{equation}
\begin{split}
\tilde{\boldsymbol{\rho}} = {\rm arg}\underset{ \boldsymbol{\rho} }{\rm ~max}&~  \log( \log (1+\zeta)) + \log(\log(1+\xi_2)),
\label{NOMA_Opt2}
\end{split}
\end{equation}
subject to $c_1$ and $c_2$ in \eqref{NOMA_Opt}, and $c_3:~ \xi_{1} \ge \zeta$, to satisfy the lower bound. Note that the $\xi_{1}$ is replaced with $\zeta$ in the objective function. In order to handle the constraint $c_1$, we introduce the another auxiliary variable $\eta$ such that $\eta = \rho_1^2$, and we replace the all $\rho_1^2$s with $\eta$, all $\rho_2^2$s with $1-\eta$ in \eqref{NOMA_SINRs} and \eqref{SINR_Thre}. We also replace the constraint $c_1$ as
$c_1: 0 \le \eta \le 1$, 
to make sure coefficients stay within the limit $[0,1]$. 

In \eqref{NOMA_Opt2}, the constraints $c_2$ and $c_3$ are non-convex. For these two expressions, we introduce two more auxiliary variables $\kappa_1$ and $\kappa_2$, and replace the constraint $c_2$ with the following expressions: 
\begin{align}
c_{2.1}:\quad& \frac{( h_{2,n}p_n )^2\eta}{\kappa_1} \geq \xi_1^{*} \,, \label{c_2NOMA} \\
c_{2.2}:\quad& \kappa_1 \geq N_0B + \sum\limits_{\substack{m = 1 \\ m \neq n} }^{N} (h_{2,m}p_m)^2 + ( h_{2,n}p_n )^2(1-\eta)  \,, \nonumber
\end{align}
while the constraint $c_3$ is replaced with:
\begin{align}
c_{3.1}:\quad& \frac{(h_{1,n}p_n)^2\eta}{ \kappa_2 } \geq \zeta \,, \label{c_3NOMA} \\
c_{3.2}:\quad& \kappa_2 \geq N_0B + \sum\limits_{\substack{m = 1 \\ m \neq n} }^{N} (h_{1,m}p_m)^2 + (h_{1,n}p_n)^2(1-\eta). \nonumber
\end{align}
The constraints $c_{2.2}$ and $c_{3.2}$ are convex because they are in the form of comparison of two optimization variables with some constant multipliers. The $\kappa_1$, $\kappa_2$, and $\eta$ are the optimization parameters in these constraints, while all other parameters are constants. The constraint $c_{2.1}$ can be expressed in the form of comparison of two optimization parameters by sending the $\kappa_1$ to the other side of the equation. On the other hand, the constraint $c_{3.1}$ is non-convex, because it includes the division of two optimization parameters, and another optimization parameter $\zeta$ on the other side of the inequality. In order to deal with that, we replace the expression $\frac{\eta}{\kappa_2}$ with its multivariate first order Taylor series expansion as we did with \eqref{taylor1}. The Taylor series expansion of the expression 
$\eta/\kappa_2 = f(\eta, \kappa_2)$, 
when evaluated at point $\eta = a$ and $\kappa_{2} = b$ is given as:
\begin{align}
f(\eta, \kappa_2) &\approx f(a, b)  
+ \frac{\partial f}{\partial \eta}(a, b)(\eta-a) + \frac{\partial f}{\partial \kappa_{2}}(a, b)(\kappa_{2} - b) \nonumber \\
&= \frac{a}{b} + \frac{1}{b}(\eta-a) - \frac{a}{b^2}(\kappa_{2} - b) ~=~ \frac{\eta}{b} - \frac{a}{b^2}\kappa_{2} - \frac{a}{b} \,. \label{TaylorNoma}
\end{align}
Now we can insert the Taylor series expansions to constraint $c_{3.1}$, and the optimization problem is finally convex.

In order to solve the problem, we need to implement MM procedure over parameters $a$ and $b$. We start with some initial values of these parameters, solve the convex optimization problem, update the MM parameters for the found values of $\eta$ and $\kappa_2$, and repeat until the parameters stay unchanged for two consecutive iterations. With the suggested changes, the final optimization problem becomes the following:
\begin{align}
\tilde{\boldsymbol{\eta}}, \tilde{\boldsymbol{\zeta}}, \tilde{\boldsymbol{\kappa_1}}, \tilde{\boldsymbol{\kappa_2}} &= {\rm arg}\!\!\underset{ \boldsymbol{\eta, \zeta, \kappa_1, \kappa_2} }{\rm ~max}~ \!\!\log (1+\zeta) + \log(1+\xi_2), \label{NOMA_Opt3}
\end{align}
subject to $c_1: 0 \le \eta \le 1$, $c_{2.1}$ and $c_{2.2}$ in \eqref{c_2NOMA}, $c_{3.2}$ in \eqref{c_3NOMA}, and 
\begin{align}
c_{3.1}:\quad& (h_{1,n}p_n)^2 \left( \frac{\eta}{b} - \frac{a}{b^2}\kappa_{2} - \frac{a}{b} \right) \geq \zeta \,, \nonumber
\end{align}
where $\eta$ and $\kappa_2$ are evaluated at $a$ and $b$. This problem can be directly solved with convex optimization tools such as CVX \cite{IRM_Cheth_Chandra}.

\vspace{-3mm}

\section{Simulation Results} \label{SimRes}
We conduct computer simulations using MATLAB, where we consider a square room of dimensions 8~m $\times$ 8~m $\times$ 4~m. The LED transmitter is located in the center at ceiling level, and receivers are distributed at uniformly random locations at 0.85~m height, facing upwards. A wide FOV angle is considered so that the LED is always within LOS of the receivers. The remaining simulation parameters are given in Table~\ref{SimPar}.

\begin{table}\vspace{-1mm}
	\caption{Simulation parameters.}
		\vspace{-7mm}
	\begin{center}
		\begin{tabular}{ |c|c| }
			\hline
			The transmit power, $p$ & 1~W \\
			\hline
			The receiver responsivity, $r$, and effective surface area, $A_{\rm r}$ & 1~A/W, and $1$~cm$^2$ \\
			\hline
			The modulation bandwidth $B$ &  20 MHz \\
			\hline
			The AWGN spectral density, $N_0$ & 2.5$\times 10^{-20}$ A$^2$/Hz \\
			\hline
			LED directivity indexes $\gamma_{\rm min}$, $\gamma_{\rm max}$, and $\gamma_{\rm def}$ & 1, 15, and 5 \\
			\hline
			Elevation angle limits $\alpha_{\rm min}$ and $\alpha_{\rm max}$ & $200^\circ$ and $340^\circ$ \\
			\hline
		\end{tabular}
		\label{SimPar}
	\end{center}
\vspace{-8mm}
\end{table}

\begin{figure}[tp]
	\centering
	\vspace{-1mm}
	\subfigure[Single steerable beam AP.]{
		\includegraphics[width = 3 in] {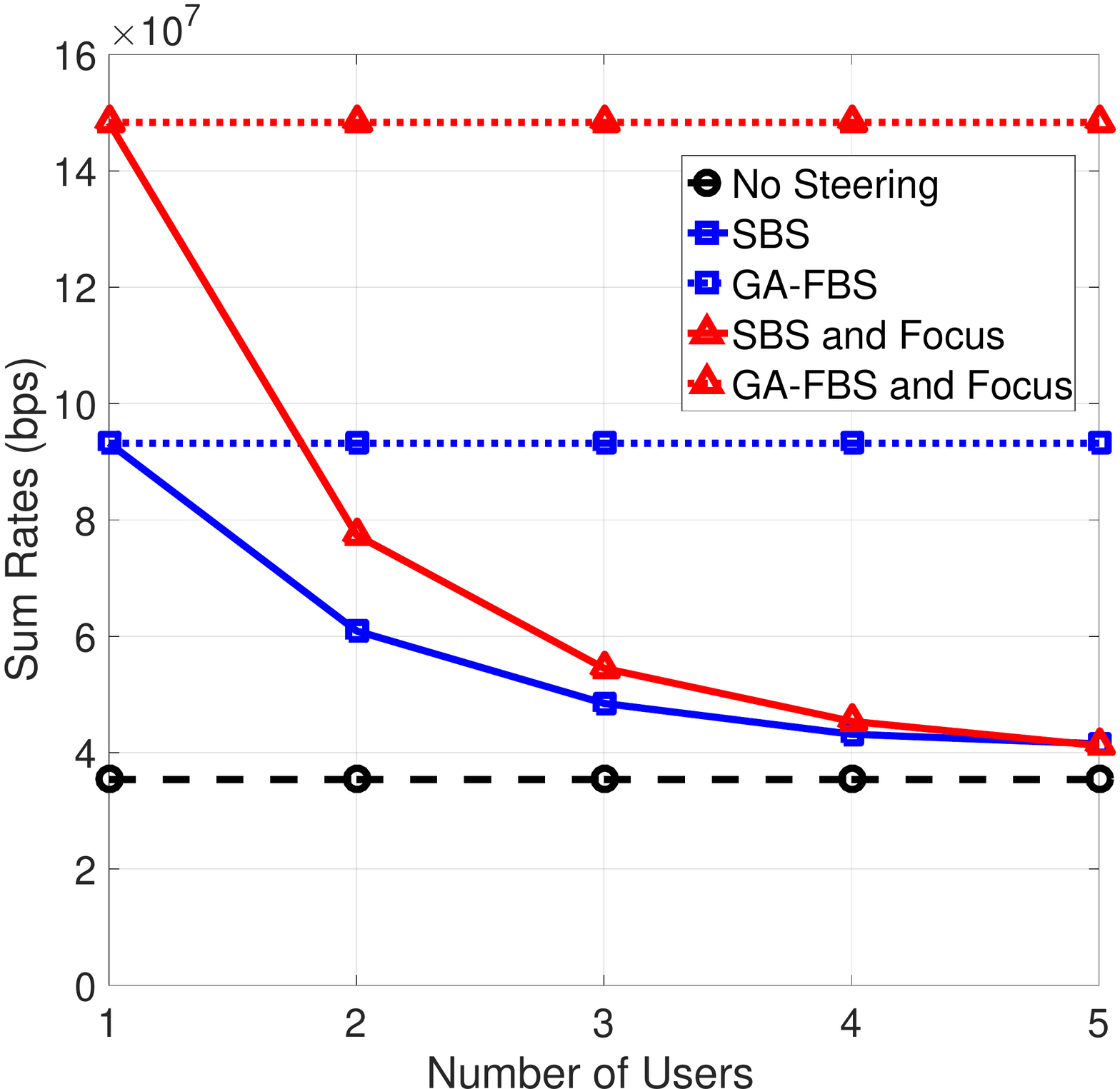}
		\label{RateSingleBeam}
	}~
	\subfigure[Three independently steerable beams.]{
		\includegraphics[width = 3 in] {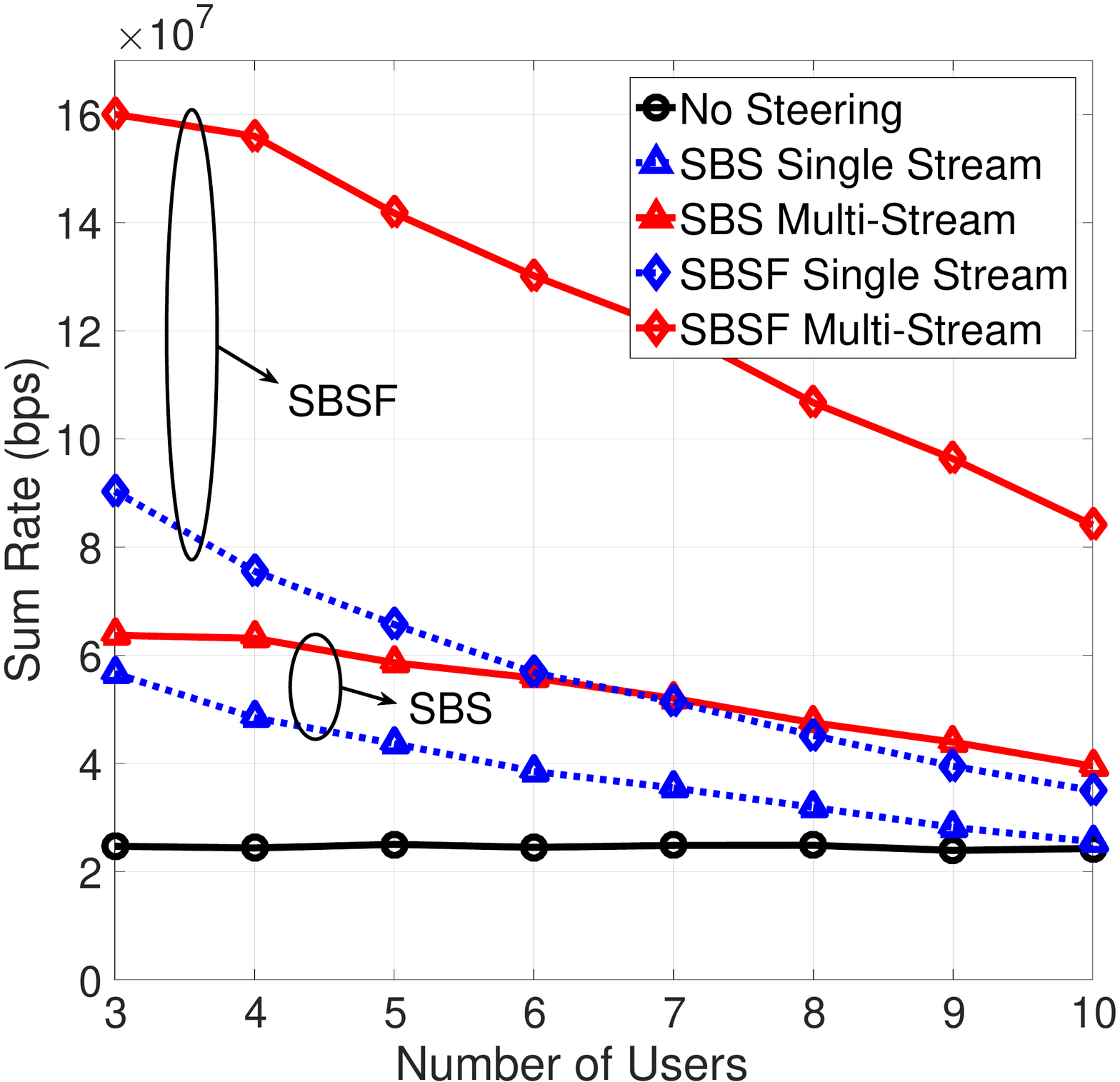}
		\label{RateClustering}
	}\vspace{-1mm}
    \caption{ The sum rate with single and multiple steerable beams. }\vspace{-5mm}
\end{figure}

\subsection{Single LED Beam Steering}

In Fig.~\ref{RateSingleBeam}, the sum rate of users are shown when the transmitter has a single steerable beam. We simulate three different scenarios. The first one is labeled as ``No Steering", where the LED beam is not steered and faced downwards with the default directivity index $\gamma_{\rm def}$. The second one is labeled as slow beam steering (SBS), where the beam is steered as described in Section~\ref{SystemModel}.A. In this scheme, we assume the directivity index cannot be changed, and equal to $\gamma_{\rm def}$. The third scenario is labeled as slow beam steering and focus (SBSF), where both beam orientation and directivity index are optimized. For comparison, we also consider a \emph{genie-aided} fast beam steering (GA-FBS) approach as an upper bound on the sum rate. In particular, while settling time for steering may be on the order of 5 ms in practice~\cite{Morrison:15}, we assume that we can instantaneously steer beams to each scheduled user so that the beam is completely steered towards the user that is being served at each time slot. 

Results in Fig.~\ref{RateSingleBeam} show that when there is a single user, a significant gain on the sum rate can be achieved with steering and focusing. In this case, optimal steering angles point to the direction of the user, and the optimal directivity index is high since the user is on the exact direction of the beam. When the number of users increases, the total rate achievable with steering decreases. The optimization maximizes the sum of the logarithm of rates to serve all users simultaneously; therefore the beam orientation does not point to a single user, and the optimal directivity index gets lower. Since users are not in the exact direction of the beam, the sum rate decreases as the number of users increases. The sum rate for GA-FBS schemes does not decrease, because the beam is steered towards the receiving user at each time interval, and we consider the average rate over a large number of user locations. 

\vspace{-3mm}

\subsection{Multiple LEDs and User Clustering}
In Fig.~\ref{RateClustering}, the sum rate of users are shown when the AP has three independently steerable beams. The transmit power of these beams are $p/3$ (versus $p$ that was used in Fig.~\ref{RateSingleBeam}) for a fair comparison. For this simulation, we consider two different multiple access schemes. The first one is labeled as ``single stream" and shown with dashed blue lines, where all beams transmit the same signal to avoid any interference. In this scheme, the signal strength is higher, and the interference is zero. However, all the users are served with time division of a single stream, therefore they are allocated a lower amount of TDMA time resources. In the multi-stream scheme shown with solid red lines, all beams transmit a different stream to the users assigned to them. Since Fig.~\ref{RateClustering} shows results for an AP with three independently steerable beams, the multi-stream scheme has three different streams. If multiple users are assigned to the same beam, they share the channel with TDMA. Due to the use of spatial diversity and higher time allocation to the users, this scheme may offer higher rates than the single stream scheme.

\begin{figure}[tp]
	\centering
	\vspace{-1mm}
	\subfigure[The CDF of individual user rates with three steerable beams and six users. The rates are shown in logarithmic scale.]{
		\includegraphics[width = 3 in] {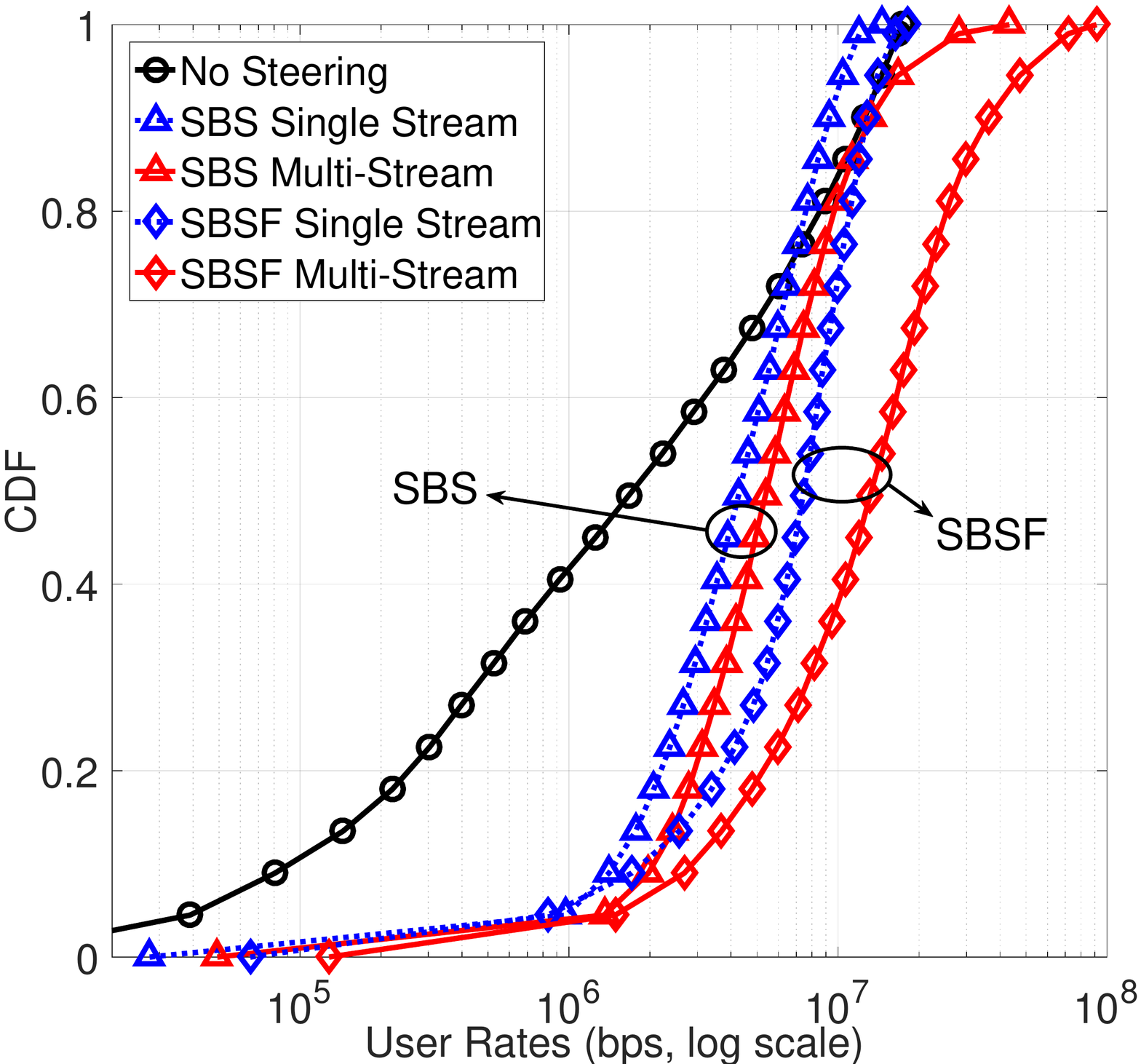}
		\label{CDF}
	}~~
	\subfigure[The sum rate when the AP serves 10 users and has varying number of steerable beams.]{
		\includegraphics[width = 3 in] {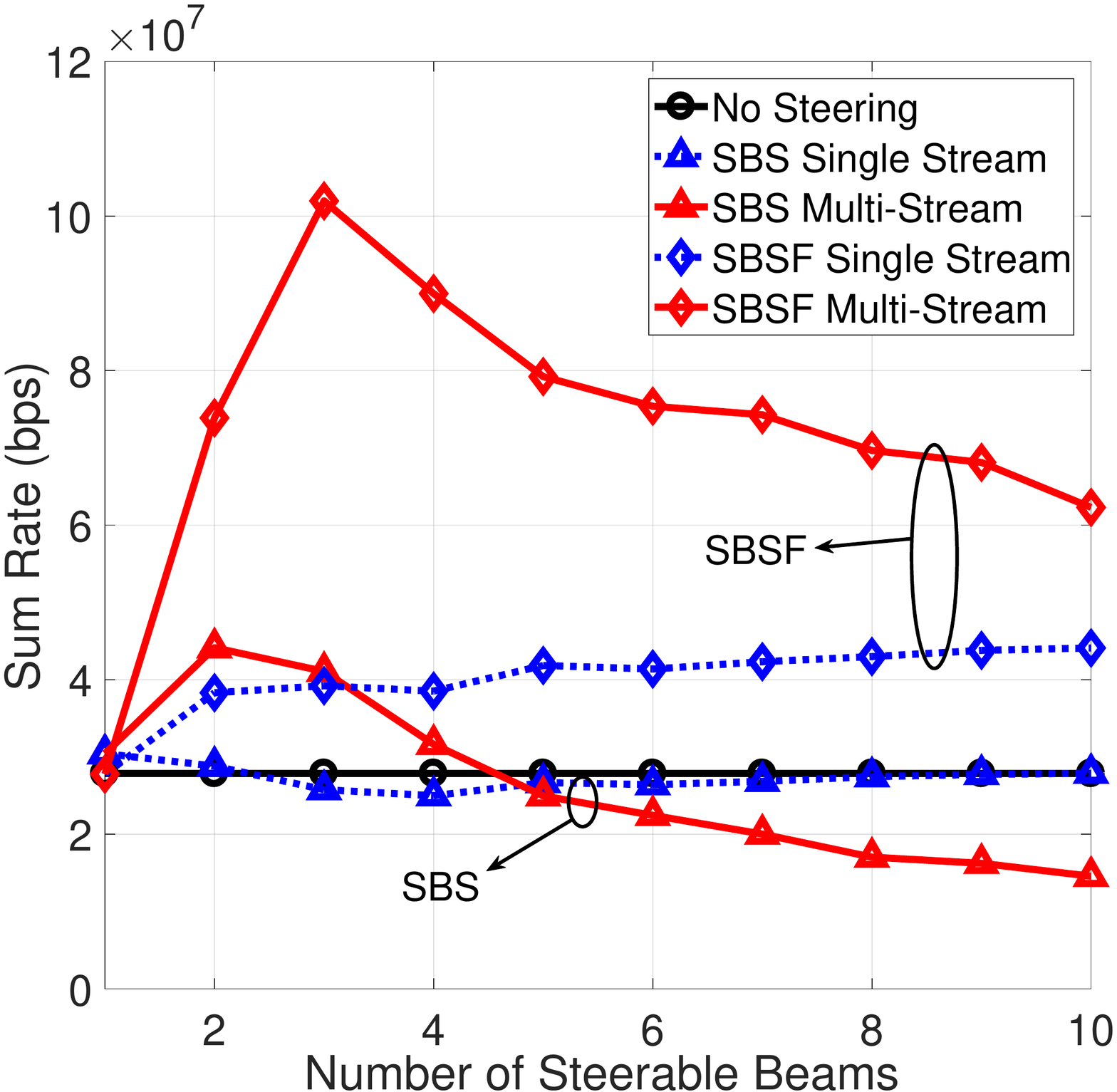}
		\label{VaryingBeams}
	}\vspace{-1mm}
    \caption{User rates and sum rates with multiple steerable beams. }\vspace{-5mm}
\end{figure}

As seen in Fig.~\ref{RateClustering}, the SBSF multi-stream provides the highest sum rates. The SBS multi-stream does not relatively perform well, especially with the lower number of users. In this scheme, some users suffer heavy interference because the directivity of the beams cannot be adjusted as needed. With the SBS single stream scheme, the sum rate decreases and approaches to no steering scheme with the increasing number of users. Since the ratio of users to the number of beams increases significantly, steering becomes less effective. Note that in Fig.~\ref{RateClustering} the sum rates do not decrease rapidly as in Fig.~\ref{RateSingleBeam}, especially sum rates of multi-stream schemes. This is due to VUC algorithm clustering users together that can receive high signal strength through a single beam. In Fig.~\ref{CDF}, the cumulative distribution function (CDF) of user rates are shown for six users and three steerable beams, as in the case of Fig.~\ref{RateClustering}. The steering provides more uniform distribution of user rates in comparison to no steering scheme since the optimization problem maximizes the sum of the logarithm of rates and provides a fairer resource allocation.

In Fig.~\ref{VaryingBeams}, the sum rates are shown for 10 users with a varying number of independently steerable beams. The transmit power of each beam is $p/N$, where $N$ is the number of beams. SBSF with multi-stream provides the highest sum rate, which is maximized at three beams per 10 users where the sum rate exceeds the four times of no steering scheme. The higher number of beams means better steering accuracy and higher received signal strength, however, it also causes higher interference in the multi-stream scheme and a lower transmit power per beam. The ideal user count per beam ratio may change based on the size of the room or the total number of users in the room. The SBS multi-stream scheme provides a lower data rate than no steering scheme if the number of steerable beams is high. This is because the signal strength of each user is low, and the interference from other beams is high. In no steering scheme and single stream schemes, there is no interference since all users are served in turn with TDMA. The SBS single stream scheme falls below no steering for 3 or 4 beams, which is possible because the optimization maximizes the sum of logarithmic rates instead of the sum rate. 

\begin{figure}[tp]
	\centering
	\vspace{-1mm}
	\subfigure[Varying number of users and the AP has three independently steerable beams. ]{
		\includegraphics[width = 3 in] {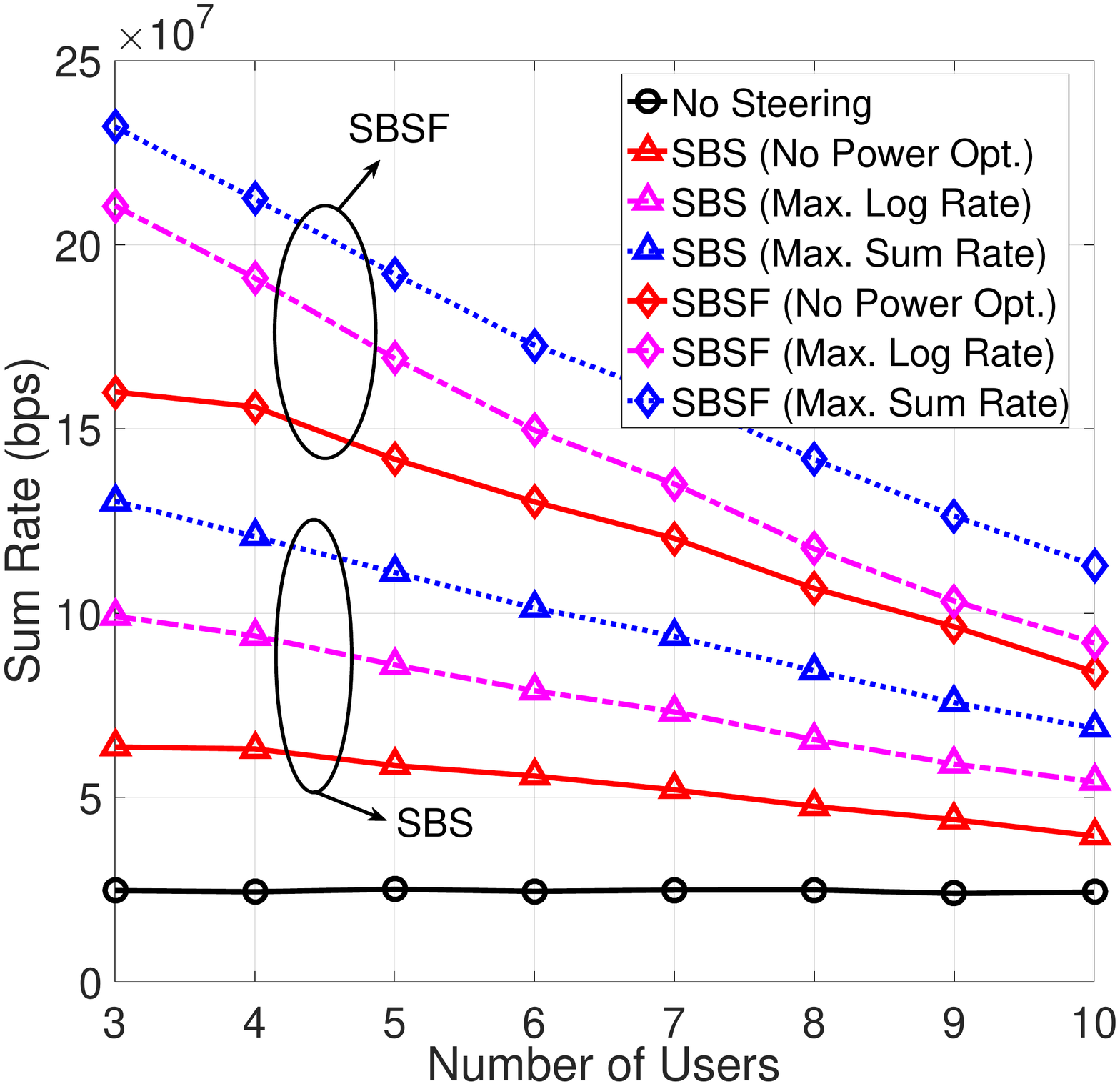}
		\label{RateClusteringOptimization}
	}~~
	\subfigure[Varying number of steerable beams with 10 users.]{
		\includegraphics[width = 3 in] {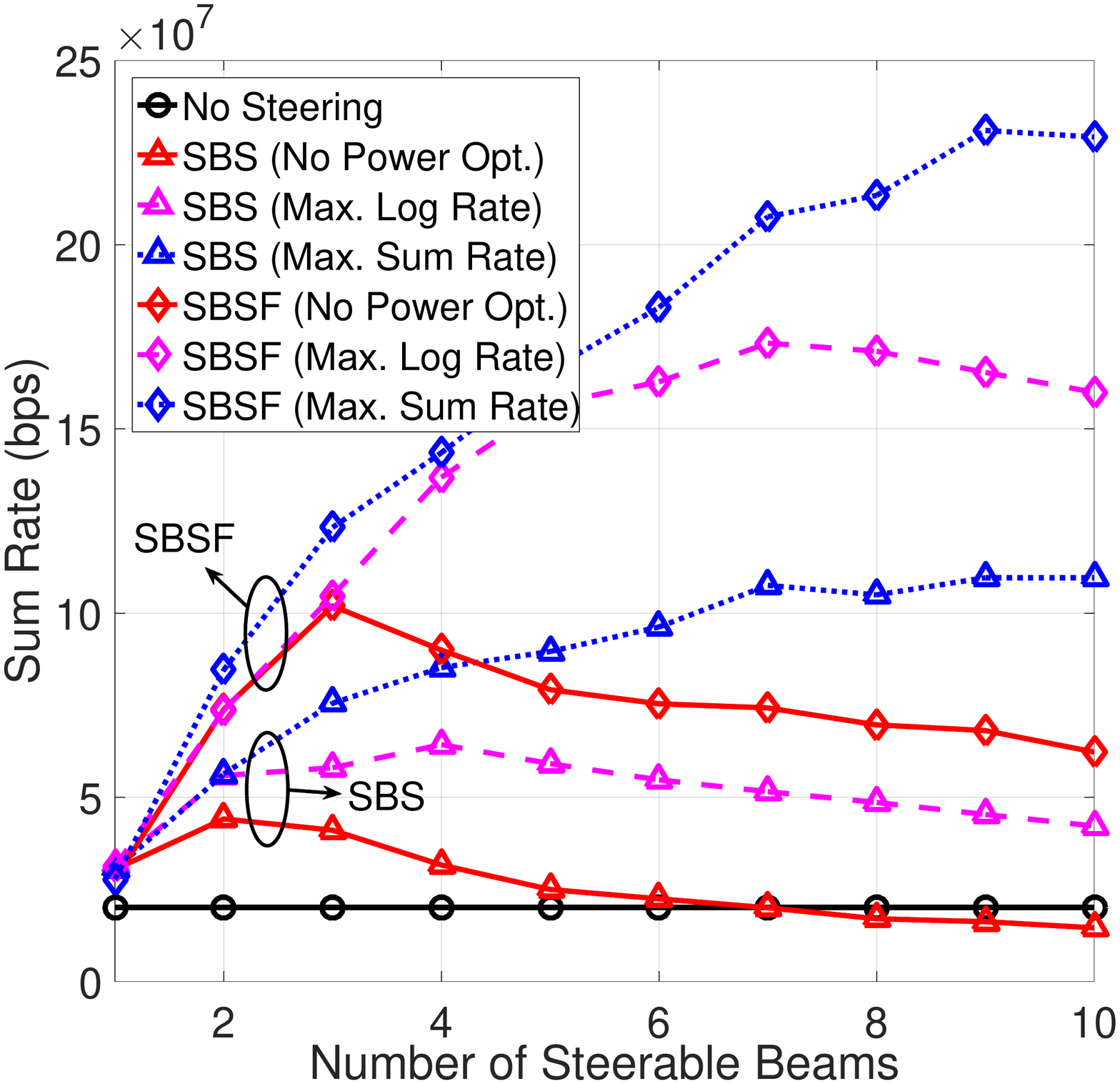}
		\label{VaryingBeamOptimization}
	}\vspace{-1mm}
    \caption{ The sum rate of users with power optimization where the beams transmit different streams (multi-stream). }\vspace{-5mm}
\end{figure}

\vspace{-3mm}

\subsection{Beam Power Optimization}
In Fig.~\ref{RateClusteringOptimization}, the sum rates of users are shown for an AP with 3 steerable multi-stream beams when the beam power optimization as in \eqref{powerOpt2} is used. The results for single-stream beams are not included due to poor performance and the requirement of a separate optimization solution. The results of power optimization for maximizing the sum rates are labeled as "Max. Sum Rate", and shown in dotted lines. The power optimization significantly increases the sum rate of the users, where the rate gain is between 30 - 70 Mbps for both SBS and SBSF. The rate gain is provided by assigning more power to the LEDs that have stronger LOS connection with users or serving more users overall. The sum rates of power optimization for maximizing the sum of the logarithm of the rates are labeled as "Max. Log Rate", and shown in dashed lines. In this case, there is a sum rate gain compared to "No Power Opt.", but the gain is not as high as the maximization of the sum rate. 

In Fig.~\ref{VaryingBeamOptimization}, the sum rates are shown for an AP with varying number of steerable beams. When there is no power optimization, the sum rate decreases for a high number of steerable beams. However, with the power optimization, the sum rate increases consistently. This is thanks to the interference adjustment feature of the power optimization solution. Since the power allocation is done considering the interference to other users, the higher number of beams can be utilized more efficiently. With 10 beams, the sum rate of the maximum sum rate case reaches to 10 times of the sum rate of no steering scheme.

\begin{wrapfigure}{r}{0.5\textwidth}
	\centering\vspace{-3mm}
	\includegraphics[width = 3.1 in]{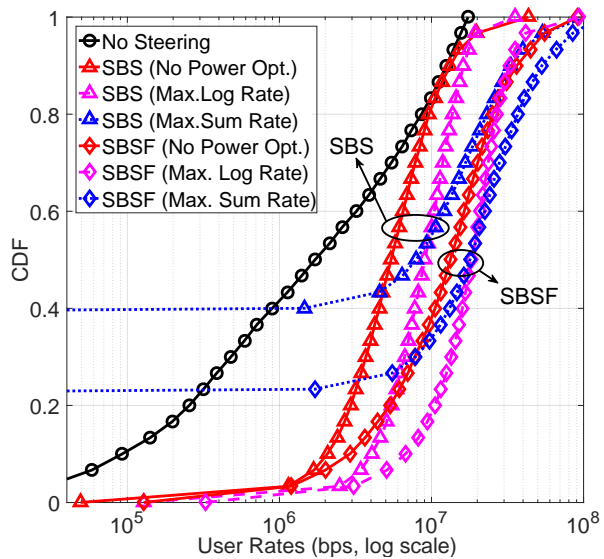}
	\vspace{-4mm}
	\caption{The CDF of individual user rates with three steerable beams and six users. The rates are shown in logarithmic scale.}
	\label{CDFPowOpt}\vspace{-5mm}
\end{wrapfigure}

In Fig.~\ref{CDFPowOpt} the CDF of individual user rates are shown for six user and three steerable beams case. The figure includes the data rates after power optimization of the beams for maximizing the sum rate and the logarithmic sum rate. Power optimization for maximum sum rate leaves some users without service but provides some other users much higher data rates. On the other hand, optimization for maximizing the logarithmic sum rate serves all users and increases the data rates overall. In this case, the low-rate users have more gain compared to high-rate users. It is also seen that users with the highest data rates actually lose some data rate after this optimization. 

\vspace{-3mm}

\subsection{NOMA}
In this subsection, simulations are conducted for three steerable beams setting and 10 users, with other parameters being the same as previous simulations. Users in the same cluster are paired to be served by NOMA. The users with most distinctive channel gains are paired if they meet the SINR threshold $\xi_1^* = 3$ as described in \eqref{SINR_Thre}. Then the remaining users are paired if they meet the same threshold. The users that are not paired are served by TDMA and get half the time allocation of NOMA user pairs. 

In Fig.~\ref{VaryingCoeff}, sum and individual data rates of two users using NOMA are shown for different small power coefficient ($\rho_2$) values. For comparison, the TDMA rates are also shown for the same users if they were not served by NOMA. The reason TDMA rates slightly increases with $\rho_2$ is that as $\rho_2$ increases fewer user pairs can achieve the threshold for NOMA, and their corresponding TDMA rates are slightly higher. The NOMA provides a gain in the sum rate compared to TDMA for all cases, except the case where $\rho_2$ is nearly equal to zero. When $\rho_2$ is between $0.1$ and $0.12$, both users have data rate gain compared to TDMA. As $\rho_2$ increases the sum NOMA rate slightly increases, however it causes an unfair allocation since the weak user's data rate decreases even further. The power coefficients should be selected by considering the trade-off between fairness and higher sum rate. 

\begin{figure}[tp]
	\centering
	\vspace{-1mm}
	\subfigure[User rates for varying $\rho_2$. Sum rate is the sum of the rates of weak and strong users.]{
		\includegraphics[width = 3 in] {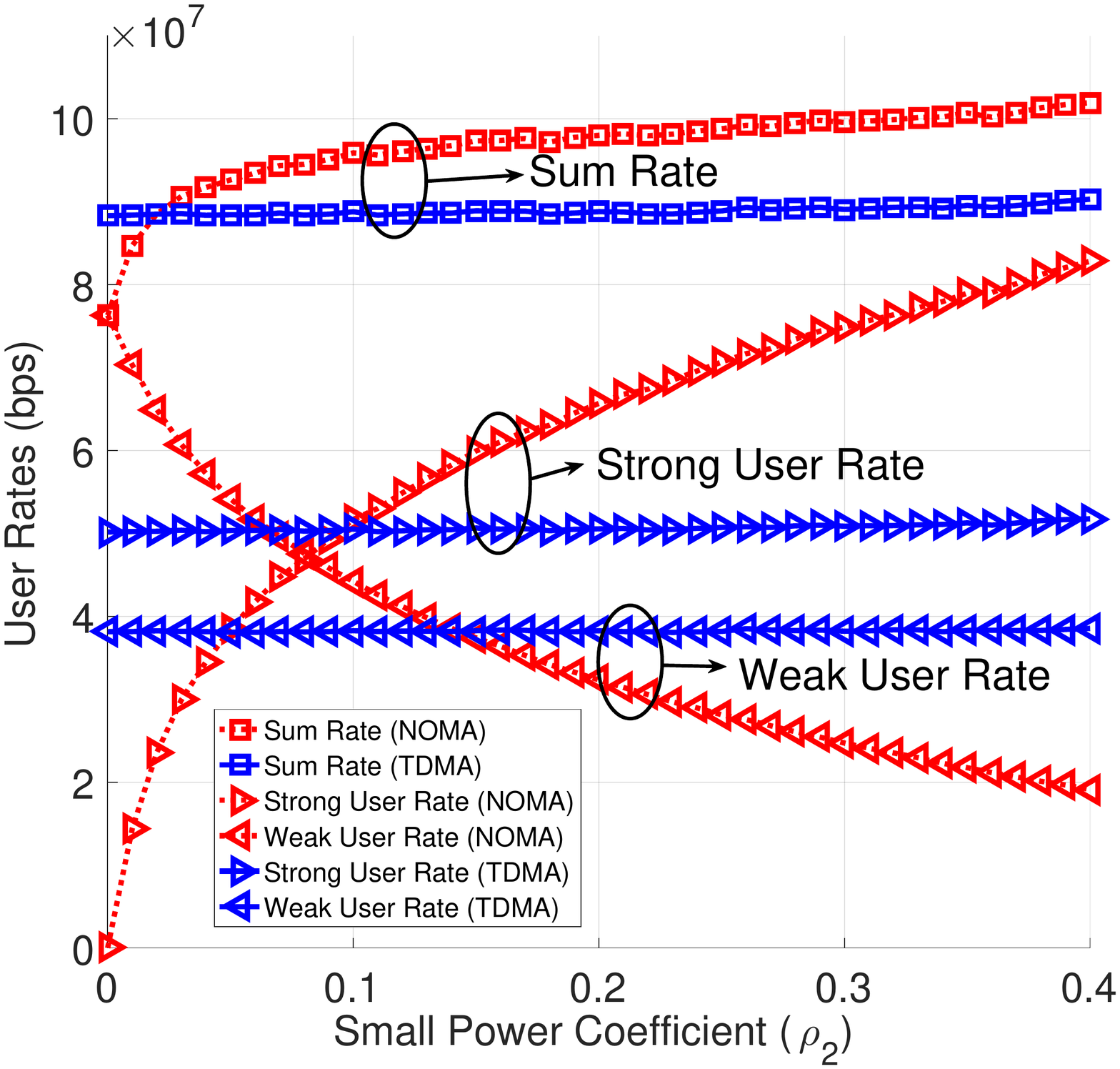}
		\label{VaryingCoeff}
	}~~
	\subfigure[The ratio of NOMA user pairs achieving SINR threshold for varying threshold and different $\rho_2$ values.]{
		\includegraphics[width = 3 in] {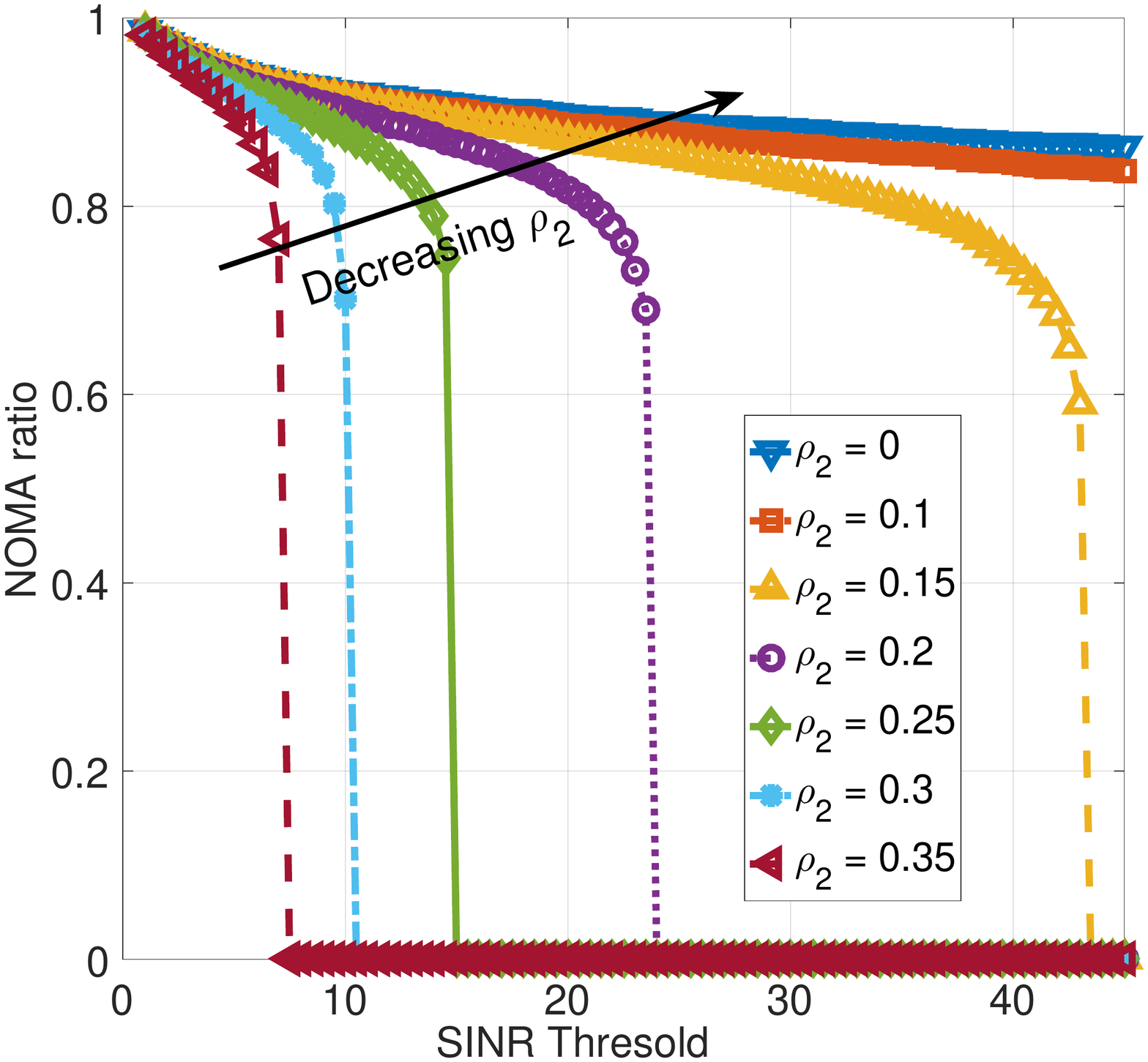}
		\label{VaryingThreshold}
	}\vspace{-1mm}
    \caption{ NOMA user rates and the ratio of users achieving NOMA threshold. }\vspace{-5mm}
\end{figure}

Another parameter for designing NOMA is the SINR threshold that needs to be achieved to implement NOMA. The SINR for the second user to decode the signal of the first user is given in \eqref{SINR_Thre}. A threshold $\xi_{2\rightarrow1} \geq \xi_1^{*}$ should be chosen as a design parameter for NOMA preference over TDMA. Fig.~\ref{VaryingThreshold} shows the ratio of NOMA users that achieves the threshold for different $\rho_2$ and $\xi_1^{*}$ levels. For small values of $\rho_2$, the SINR threshold is easily achieved even for a high threshold. Small $\rho_2$ causes the $\xi_{2\rightarrow1}$ to be larger, which provides less error probability for successive interference cancellation. For larger values of $\rho_2$, the SINR threshold should be decreased to allow NOMA. This also increases the risk of erroneous interference cancellation. Fig.~\ref{VaryingCoeff} suggests that the sum rate can be increased by increasing $\rho_2$. However, Fig.~\ref{VaryingThreshold} also shows that high $\rho_2$ may cause most users to not to use NOMA, which may diminish the sum rate gain. Overall, making $\rho_2$ smaller, or making the power coefficient of the users as distinct as possible, provides fairer rate increase of users and decreases the probability of erroneous interference cancellation.

In Fig.~\ref{CDF_NOMA}, we present the CDF of NOMA user rates with optimized coefficients as in \eqref{NOMA_Opt3}. The black line with circle markers shows the data rates of user pairs in case these users are served with TDMA. The solid line with lower data rate is for the weak user, and the dashed line with higher data rate is for the strong user. On the x-axis, between $10^7$ and $10^8$, each vertical line means a $10^7$ bps data rate. There is about a 10 Mbps data rate difference between weak and strong TDMA users. The red line with triangle markers shows the data rates for the same users when they are served by NOMA, and the NOMA coefficients are calculated to maximize the sum rate using \eqref{NOMA_Opt3}. The strong user rate that is shown with dashed lines has a significant rate gain compared to TDMA in the high data rate region, which is the upper parts of the line. The weak user does not have better data rates than TDMA in the high rate region, but it is better at low rate region. Overall, the NOMA provides gain for some users, but it decreases the data rates for some other users, which might be both weak or strong user in the pair.

\begin{figure}[tp]
	\centering
	\vspace{-1mm}
	\subfigure[The CDF of individual user rates. WU stands for weak user and SU stands for strong user.]{
		\includegraphics[width = 3 in] {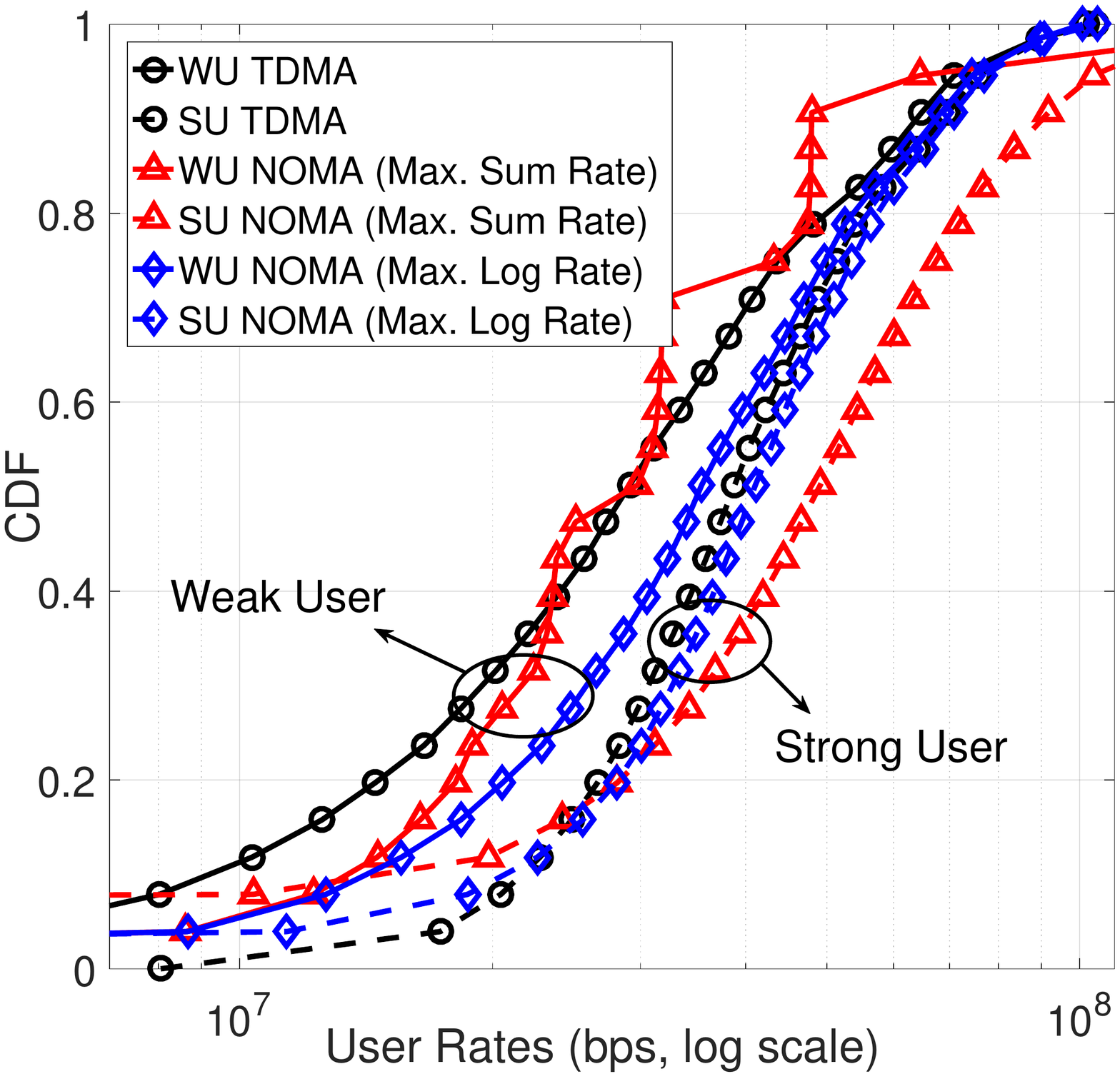}
		\label{CDF_NOMA}
	}~~
	\subfigure[The CDF of sum rates.]{
		\includegraphics[width = 3 in] {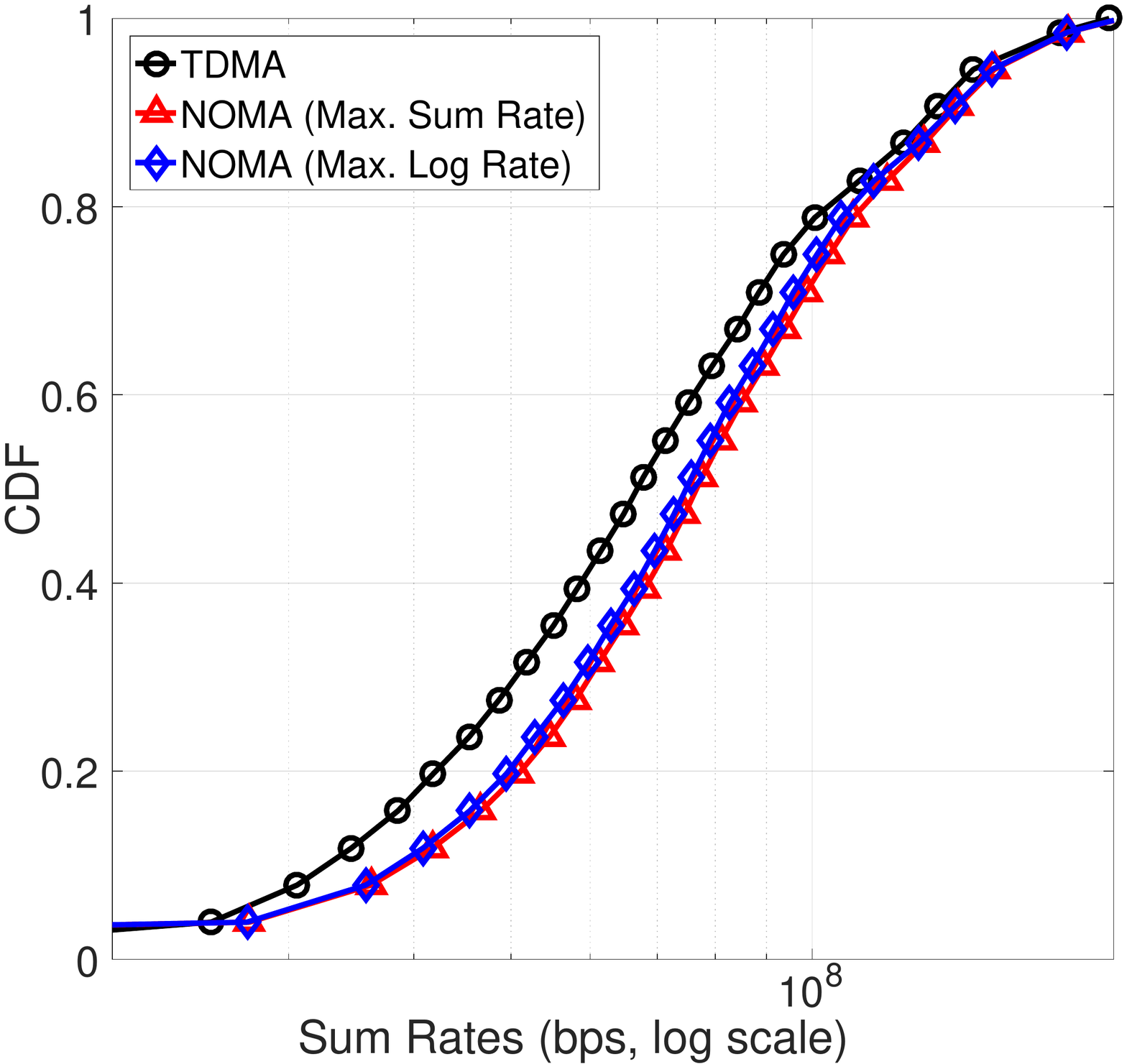}
		\label{CDF_NOMA_Sum_Rates}
	}\vspace{-1mm}
    \caption{The CDF of user rates for users pairs that are eligible for NOMA. }\vspace{-5mm}
\end{figure}

In Fig.~\ref{CDF_NOMA}, the blue line with diamond markers show the data rates for the same users when they are served by NOMA, and the NOMA coefficients are calculated to maximize the sum of the logarithm of the rates. In this case, the weak user has a significant data rate gain compared to TDMA in all regions. The strong user has a small data rate gain compared to TDMA in most regions. Only the bottom $18\%$ of the strong users has a rate loss compared to TDMA, but they still do better than weak users. Overall, this NOMA scheme provides a significant data rate gain for all weak users, and a slight data rate gain for most strong users, with a small data rate loss for some strong users. 

In Fig.~\ref{CDF_NOMA_Sum_Rates}, the CDF of the sum rates are shown for the same users in Fig.~\ref{CDF_NOMA}. Both NOMA schemes provide a significant sum rate gain over TDMA. The sum rate difference between TDMA and NOMA is about 10 Mbps for most users. The NOMA coefficients that maximize sum rate provides a slight sum rate gain over the coefficients that maximize the sum of the logarithm of the rates. 

\vspace{-2mm}

\section{Conclusion} \label{Conc}
In this paper, we study the optimal beam steering parameters for VLC when there are more users than the steerable components. We find the optimal steering angles and LED directivity for a single LED and multiple users. The results show that steering VLC beams and changing the directivity can improve the user rates significantly. Although serving a single user maximizes the user rates, multiple users can also be served using a single steerable beam with a significant sum rate gain over no steering scheme. We also propose a method for decreasing the search space, thus the computation time of the optimization solution. This method decreases the search space to 1\%-90\% depending on user distribution. In case of a multiple steerable beam setting, we cluster users and serve each cluster with a separate beam. This setting allows higher data rates by clustering close users together and providing more accurate steering. Additionally, we optimize the transmit power of each beam to increase the sum rate or proportionally fair sum rate. With the clustering and power optimization, the sum rate can reach ten times of the sum rate of no steering scheme. Finally, we propose a user clustering and NOMA scheme to utilize the space diversity of the users and further increase the data rates. NOMA can provide an additional 10 Mbps rate gain for two users that are paired together.

\appendices
\section{Proof of Proposition~\ref{prop1}} \label{Appendix1}

Consider the LOS channel gain between the transmitter LED and the $k$-th receiver in \eqref{h_first}. The only parameter in \eqref{h_first} affected by the orientation is $\phi_k$, which is the angle between the LED orientation and the vector $\textbf{v}_k$ from LED to the $k$-th receiver. When the LED points to the receiver, $\phi_k = 0$, and the channel gain is maximized. When $\phi_k$ increases, the channel gain decreases. 

Now consider that we have two users in the system as shown in Fig.~\ref{ProofFigure}(a), and denote them user-1 and user-2. Let the line segment between the locations of two users is denoted by $\mathcal{K}$. Let $\phi_{1-2}$ be the angle between $\textbf{v}_1$ and $\textbf{v}_2$, the vectors towards user-1 and user-2 from the LED, respectively. Note that, the $\phi_{1-2}$ is independent of the orientation of the beam. First, assume that the LED is either pointing to user-1 or user-2, or somewhere on $\mathcal{K}$. In this case $\phi_{1-2} = \phi_1 + \phi_2$. When the LED is steered towards user-1, the $\phi_1 = 0$, and $\phi_2 = \phi_{1-2}$. 

Now assume that the LED is steered to a point not on the line segment $\mathcal{K}$. This steering is not Pareto efficient because $ \phi_1 + \phi_2 > \phi_{1-2}$. We can find a steering orientation with smaller $\phi_1$ and $\phi_2$ if the LED is steered towards $\mathcal{K}$. In Fig.~\ref{ProofFigure}(a) we illustrate a scenario where the LED is steered towards point A, which is not on $\mathcal{K}$. If the LED is steered towards point B (the projection of point A onto $\mathcal{K}$) instead of point A, both angles $\phi_1$ and $\phi_2$ will decrease, which means (based on \eqref{h_first}) that both users will have higher channel gains. Therefore steering the LED towards point A cannot be optimal. The optimal steering angles that maximize \eqref{betaOpt} point towards either one of the two users, or somewhere on $\mathcal{K}$.
\hfill\IEEEQEDhere

\begin{figure}[tp]
	\centering
	\vspace{-1mm}
	\subfigure[Two users.]{
		\includegraphics[width=2in] {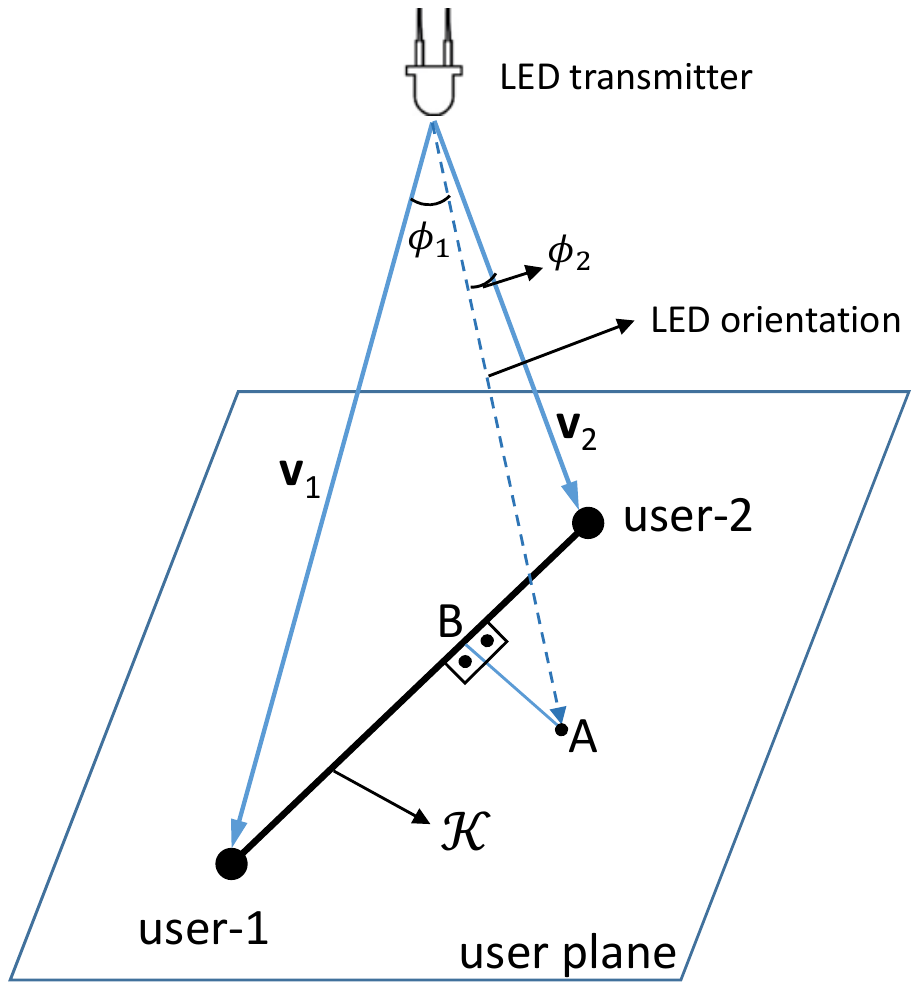}
		\label{ProofTwoUsers}
	}
	\subfigure[Three users.]{
		\includegraphics[width=2.3in] {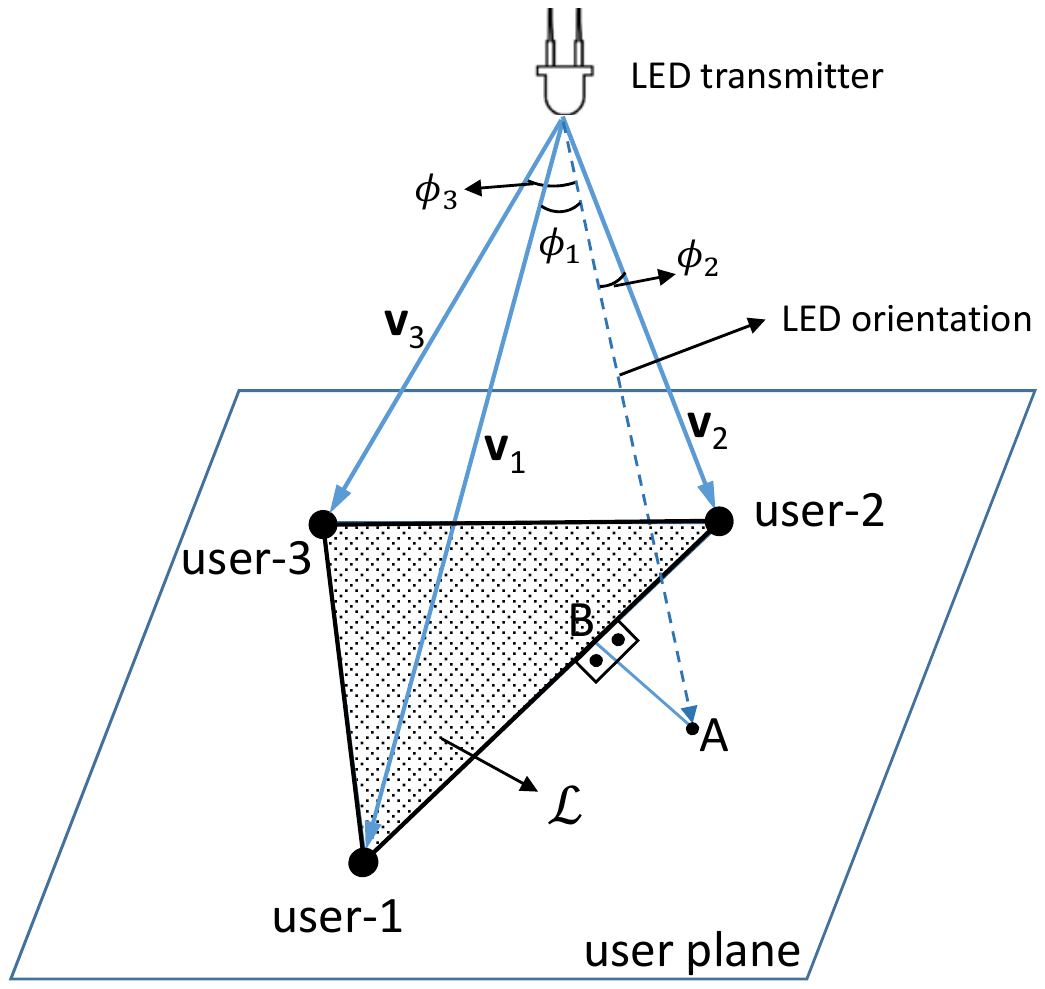}
		\label{ProofThreeUsers}
	}\vspace{-2mm}
    \caption{Steering the LED towards users in two or three user scenarios. In both scenarios, steering the LED towards point B instead of point A provides higher LOS channel gain to all users in the system.}
	\label{ProofFigure}\vspace{-3mm}
\end{figure}

\section{Proof of Proposition~\ref{prop2}} \label{Appendix2}

In case there are three users that are not on the same line, the optimal steering orientation of the LED that is described in \eqref{betaOpt} has to point somewhere within the triangle defined by the user locations. Let us denote the triangle $\mathcal{L}$. For any orientation of the LED that does not point to $\mathcal{L}$, we can find some orientation that points to it and has smaller $\phi_1$, $\phi_2$, and $\phi_3$. To prove that, assume the LED is steered towards point A, which is not on $\mathcal{L}$ but on the plane that includes $\mathcal{L}$. Now change the intersection point to the closest point to A within $\mathcal{L}$, and steer the LED towards that point. All three angles $\phi_1$, $\phi_2$, and $\phi_3$ will decrease, which means the previous steering angle was not Pareto efficient. It can be seen in Fig.~\ref{ProofFigure}(b) that, steering the LED towards point B provides all three angles to be smaller compared to point A. If the LED orientation does not intersect the plane at all, the steering is not good and should be changed towards the user locations. Similarly, if there are more than 3 users that are on the same plane, the optimal steering angle points to somewhere within the convex hull of location points. It can be shown using the same method applied to two user and three user cases.
\hfill\IEEEQEDhere

\bibliographystyle{IEEEtran} 
\bibliography{new}

\end{document}